\newcommand{\mycat}[1]{\mathcal{#1}}
\newcommand{\myicat}[1]{\mathbf{#1}}
\newcommand{\mycatconst}[1]{\mathbf{#1}}
\newcommand{\mysconst}[1]{\mathsf{#1}}
\DeclareMathOperator{\myop}{op}
\newcommand{\mySet}{\mycatconst{Set}}
\newcommand{\myPrshv}{\mycatconst{PSh}}
\newcommand{\myAsm}{\mycatconst{Asm}}
\newcommand{\myPER}{\mycatconst{PER}}
\newcommand{\myrefl}{\mysconst{refl}}
\newcommand{\myind}{\mysconst{ind}}
\newcommand{\myinl}{\mysconst{inl}}
\newcommand{\myinr}{\mysconst{inr}}
\newcommand{\mymin}{\mysconst{min}}
\newcommand{\mymax}{\mysconst{max}}
\newcommand{\myProp}{\mysconst{Prop}}
\newcommand{\myhProp}{\mysconst{hProp}}
\newcommand{\myId}{\mysconst{Id}}
\newcommand{\myPath}{\mysconst{Path}}
\newcommand{\mytuple}[1]{\langle #1 \rangle}
\newcommand{\myfromto}{\leftrightarrow}
\newcommand{\myN}{\mathbb{N}}
\newcommand{\myS}{\mathbb{S}}
\newcommand{\myyoneda}{\mathbf{y}}
\newcommand{\myUnittype}{\mathbf{1}}
\newcommand{\myEmptytype}{\mathbf{0}}
\newcommand{\myBoolean}{\mathbf{2}}
\newcommand{\myI}{\mathbb{I}}
\newcommand{\myUniverse}{\mathcal{U}}
\newcommand{\myinterval}{\myI}
\newcommand{\myel}{\mysconst{el}}
\newcommand{\myCof}{\mysconst{Cof}}
\newcommand{\myComp}{\mysconst{Comp}}
\newcommand{\myCComp}{\mysconst{C}}
\newcommand{\myFib}{\mysconst{Fib}}
\newcommand{\myGlue}{\mysconst{Glue}}
\newcommand{\mySGlue}{\mysconst{SGlue}}
\newcommand{\mycomp}{\mysconst{comp}}
\newcommand{\myid}{\mysconst{id}}
\newcommand{\myprsv}{\mysconst{pres}}
\newcommand{\mytransport}{\mysconst{tp}}
\newcommand{\mylift}{\mysconst{iea}}
\newcommand{\myequiv}{\mysconst{equiv}}
\newcommand{\myFin}{\mysconst{Fin}}
\newcommand{\myord}{\mysconst{ord}}
\newcommand{\mycatEl}{\mycatconst{El}}
\theoremstyle{plain}
\newtheorem{axiom}[theorem]{Axiom}
\Crefname{axiomi}{Axiom}{Axioms}
\title{Cubical Assemblies, a Univalent and Impredicative
  Universe and a Failure of Propositional Resizing}
\titlerunning{Cubical Assemblies}
\author{Taichi Uemura}
{University of Amsterdam, Amsterdam, the Netherlands}
{t.uemura@uva.nl}
{https://orcid.org/0000-0003-4930-1384}{}
\authorrunning{T. Uemura}
\keywords{Cubical type theory, Realizability, Impredicative universe,
  Univalence, Propositional resizing}
\begin{document}

\maketitle

\begin{abstract}
  We construct a model of cubical type theory with a univalent and
  impredicative universe in a category of cubical assemblies. We show
  that this impredicative universe in the cubical assembly model does
  not satisfy a form of propositional resizing.
\end{abstract}

\section{Introduction}
\label{sec:org60aa1d6}
\emph{Homotopy type theory} \cite{hottbook} is an extension of
Martin-L\"{o}f's dependent type theory
\cite{martin-lof1975intuitionistic} with homotopy-theoretic ideas. The
most important features are Voevodsky's \emph{univalence axiom} and
\emph{higher inductive types} which provide a novel synthetic way of
proving theorems of abstract homotopy theory and formalizing
mathematics in computer proof assistants \cite{bauer2017library}.

Ordinary homotopy type theory \cite{hottbook} uses a cumulative hierarchy of
universes
\[\myUniverse_{0} : \myUniverse_{1} : \myUniverse_{2} : \dots,\]
but there is another choice of universes: one \emph{impredicative}
universe in the style of the Calculus of Constructions
\cite{coquand1988calculus}. Here we say a universe \(\myUniverse\) is
impredicative if it is closed under dependent products along any
type family: for any type \(A\) and function \(B : A \to
  \myUniverse\), the dependent product \(\prod_{x : A}B(x)\) belongs
to \(\myUniverse\). An interesting use of such an impredicative universe
in homotopy type theory is the \emph{impredicative encoding} of higher
inductive types, proposed by Shulman \cite{shulman2011impredicative},
as well as ordinary inductive types in polymorphic type theory
\cite{girard1989proofs}. For instance, the unit circle
\(\myS^{1}\) is encoded as \(\prod_{X : \myUniverse}\prod_{x : X}x =
  x \to X\) which has a base point and a loop on the point and
satisfies the recursion principle in the sense of the HoTT book
{\cite[Chapter 6]{hottbook}}. Although the impredicative encoding of a higher
inductive type does not satisfy the induction principle in general,
some truncated higher inductive types have refinements of the
encodings satisfying the induction principle
\cite{speight2017impredicative,awodey2018impredicative}.

In this paper we construct a model of type theory with a univalent
and impredicative universe to prove the consistency of that type
theory. Impredicative universes are modeled in the category of
\emph{assemblies} or \emph{\(\omega\)-sets}
\cite{longo1991constructive,phoa2006introduction}, while univalent
universes are modeled in the categories of groupoids
\cite{hofmann1998groupoid}, simplicial sets
\cite{kapulkin2016simplicialv4} and cubical sets
\cite{bezem2014model,bezem2017univalence}. Therefore, in order to
construct a univalent and impredicative universe, it is natural to
combine them and construct a model of type theory in the category of
internal groupoids, simplicial or cubical objects in the category of
assemblies.  There has been an earlier attempt to obtain a univalent
and impredicative universe by Stekelenburg
\cite{stekelenburg2016constructive} who took a simplicial approach. A
difficulty with this approach is that the category of assemblies
does not satisfy the axiom of choice or law of excluded middle, so
it becomes harder to obtain a model structure on the category of
simplicial objects. Another approach is taken by van den Berg
\cite{vandenberg2018univalent} using groupoid-like objects, but his
model has a dimension restriction. Our choice is the cubical objects
in the category of assemblies, which we will call \emph{cubical
assemblies}. Since the model in cubical sets
\cite{bezem2014model,cohen2016cubical} is expressed, informally, in a
constructive metalogic, one would expect that their construction can
be translated into the internal language of the category of
assemblies. A similar approach is taken by Awodey, Frey and Hofstra
\cite{awodey2017impredicative,frey2017realizability}.

Instead of a model of homotopy type theory itself, we construct a
model of a variant of \emph{cubical type theory} \cite{cohen2016cubical} in
which the univalence axiom is provable. Orton and Pitts
\cite{orton2016axioms} gave a sufficient condition for modeling
cubical type theory without universes of fibrant types in an
elementary topos equipped with an interval object \(\myI\). Although
the category of cubical assemblies is not an elementary topos, most
of their proofs work in our setting because they use a dependent
type theory as an internal language of a topos and the category of
cubical assemblies is rich enough to interpret the type theory. For
construction of the universe of fibrant types, we can use the right adjoint to the
exponential functor \((\myI \to -)\) in the same way as Licata, Orton,
Pitts and Spitters \cite{licata2018internal}.

Voevodsky \cite{voevodsky2012polymorphic} has proposed the
\emph{propositional resizing axiom} {\cite[Section 3.5]{hottbook}}
which implies that every homotopy proposition is equivalent to some
homotopy proposition in the smallest universe. The propositional
resizing axiom can be seen as a form of impredicativity for homotopy
propositions. Since the universe in the cubical assembly model is
impredicative, one might expect that the cubical assembly model
satisfies the propositional resizing axiom. Indeed, for a homotopy
proposition \(A\), we have an approximation \(A^{*}\) of \(A\) by a
homotopy proposition in \(\myUniverse\) defined as
\[
  A^{*} := \prod_{X : \myhProp}(A \to X) \to X,
\]
where \(\myhProp\) is the universe of homotopy propositions in
\(\myUniverse\), and \(A\) is equivalent to some homotopy proposition
in \(\myUniverse\) if and only if the function
\(\lambda aXh.ha : A \to A^{*}\) is an equivalence. However, the
propositional resizing axiom fails in the cubical assembly model. We
construct a homotopy proposition \(A\) such that the function \(A \to
A^{*}\) is not an equivalence.

We begin \cref{sec:org52feb89} by formulating the axioms
for modeling cubical type theory given by Orton and Pitts
\cite{orton2016axioms,orton2017axioms} in a weaker setting. In
\cref{sec:orgb4d8a75} we describe how to construct a model of cubical type
theory under those axioms. In
\cref{sec:org3425270} we give a sufficient condition for presheaf
models to satisfy those axioms. As an example of presheaf
model we construct a model of cubical type theory in cubical
assemblies in \cref{sec:orga3e7717}, and show that the
cubical assembly model does not satisfy the propositional resizing
axiom.

\section{The Orton-Pitts Axioms}
\label{sec:org52feb89}
We will work in a model \(\mycat{E}\) of dependent type theory with
\begin{itemize}
\item dependent product types, dependent sum types, extensional identity
types, unit type, disjoint finite coproducts and propositional
truncation;
\item a constant type \({} \vdash \myinterval\), called an
\emph{interval}, with two constants \({} \vdash 0 : \myinterval\) and
\({} \vdash 1 : \myinterval\) called \emph{end-points} and two
operators \(i, j : \myinterval \vdash i \sqcap j : \myinterval\)
and \(i, j : \myinterval \vdash i \sqcup j : \myinterval\) called
\emph{connections};
\item a dependent right adjoint to the exponential functor \((\myI \to
    -) : \mycat{E} \to \mycat{E}\);
\item a propositional universe \({} \vdash \myCof\) whose inhabitants
are called \emph{cofibrations};
\item an impredicative universe \({} \vdash \myUniverse\)
\end{itemize}
satisfying the axioms listed in \cref{orgda6ce3b}.
\begin{figure}
\begin{enumerate}
\item \label[axiomi]{org09c88da}
\(\neg (0 = 1)\)
\item \label[axiomi]{org541f57d}
\(\forall_{i : \myI}0 \sqcap i = i \sqcap 0 = 0
     \land 1 \sqcap i = i \sqcap 1 = i\)
\item \label[axiomi]{org6fd8c58}
\(\forall_{i : \myI}0 \sqcup i = i \sqcup 0 = i
     \land 1 \sqcup i = i \sqcup 1 = 1\)
\item \label[axiomi]{orge92c1ca}
\(i : \myI \vdash i = 0 : \myCof\)
\item \label[axiomi]{orgc87da6c}
\(i : \myI \vdash i = 1 : \myCof\)
\item \label[axiomi]{org629a444}
\(\varphi, \psi : \myCof \vdash \varphi \lor \psi : \myCof\)
\item \label[axiomi]{org8e1319f}
\(\varphi : \myCof, \psi : \varphi \to \myCof \vdash
     \sum_{u : \varphi}\psi u : \myCof\)
\item \label[axiomi]{org4e8c5eb}
\(\varphi : \myI \to \myCof \vdash \forall_{i : \myI}\varphi i :
     \myCof\)
\item \label[axiomi]{org46f619f}
\(\forall_{\varphi, \psi : \myCof}(\varphi \myfromto \psi) \to (\varphi = \psi)\)
\item \label[axiomi]{org0fe33f4}
\(\varphi : \myCof, A : \varphi \to \myUniverse, B :
      \myUniverse, f : \prod_{u : \varphi}Au \cong B \vdash
      \mylift(\varphi, f) : \sum_{\bar{A} : \myUniverse}\{\bar{f} :
      \bar{A} \cong B \mid \forall_{u : \varphi}(Au, fu) = (\bar{A},
      \bar{f})\}\)
\end{enumerate}
\caption{\label{orgda6ce3b}
The Orton-Pitts Axioms}
\end{figure}
In the rest of the section we explain these conditions in more detail.

The dependent type theory we use is Martin-L\"{o}f's extensional
type theory \cite{martin-lof1975intuitionistic}. The notion of model
of dependent type theory we have in mind is categories with families
\cite{dybjer1996internal} equipped with certain algebraic operators
corresponding to the type formers. A \emph{category with families}
\(\mycat{E}\) consists of:
\begin{itemize}
\item a category \(\mycat{E}\) of \emph{contexts} with a terminal object
denoted by \({\cdot}\);
\item a presheaf \(\Gamma \mapsto \mycat{E}(\Gamma) : \mycat{E}^{\myop}
    \to \mySet\) of \emph{types};
\item a presheaf \((\Gamma, A) \mapsto \mycat{E}(\Gamma \vdash A) :
    \mycatEl(\mycat{E}(-))^{\myop} \to \mySet\) of \emph{terms}, where
\(\mycatEl(P)\) is the category of elements for a presheaf \(P\)
\end{itemize}
such that, for any context \(\Gamma \in \mycat{E}\) and type
\(A \in \mycat{E}(\Gamma)\), the presheaf
\[(\mycat{E}/\Gamma)^{\myop} \ni (\sigma : \Delta \to \Gamma)
  \mapsto \mycat{E}(\Delta \vdash A\sigma) \in \mySet\]
is representable, where \(A\sigma\) denotes the element
\(P(\sigma)(A) \in P(\Delta)\) for a presheaf \(P\), a morphism
\(\sigma : \Delta \to \Gamma\) and an element \(A \in P(\Gamma)\).
We assume that any category with families
\(\mycat{E}\) has a choice of a representing object for this
presheaf denoted by \(\pi_{A} : \Gamma.A \to \Gamma\) and called the
\emph{context extension} of \(A\). We also require that, for every
context \(\Gamma \in \mycat{E}\), there exist types \(C_{0} \in
  \mycat{E}(\cdot), C_{1} \in \mycat{E}(\cdot.C_{0}), \dots, C_{n} \in
  \mycat{E}(\cdot.C_{0}.\dots.C_{n-1})\) and an isomorphism
\(\cdot.C_{0}.\dots.C_{n} \cong \Gamma\).
This means that, having dependent sum types, every context
\(\Gamma\) can be thought of a closed type \({} \vdash \Gamma\).
Type formers are modeled by algebraic operators. For example, to
model dependent product types, \(\mycat{E}\) has an operator \(\Pi\)
that carries triples \((\Gamma, A, B)\) consisting of a context
\(\Gamma\) and types \(A \in \mycat{E}(\Gamma)\) and \(B \in
  \mycat{E}(\Gamma.A)\) to types \(\Pi(\Gamma, A, B) \in
  \mycat{E}(\Gamma)\) and a bijection \(l(\Gamma, A, B) :
  \mycat{E}(\Gamma \vdash \Pi(\Gamma, A, B)) \cong \mycat{E}(\Gamma.A
  \vdash B)\). These operators must be stable under base changes, that
is, for any morphism \(\sigma : \Delta \to \Gamma\), we have
\(\Pi(\Gamma, A, B)\sigma = \Pi(\Delta, A\sigma, B\sigma)\) and
\(l(\Gamma, A, B)\sigma = l(\Delta, A\sigma, B\sigma)\). All
type-theoretic operations we introduce are required to be
stable under base changes, unless otherwise stated.
Note that there are alternative choices of notions of model of
dependent type theory including categories with attributes
\cite{cartmell1978generalised} and split full comprehension categories
\cite{jacobs1993comprehension}. Whichever model is
chosen, we proceed entirely in its internal language.

In dependent type theory, a type \(\Gamma \vdash \varphi\) is said
to be a \emph{proposition}, written \(\Gamma \vdash \varphi \ \myProp\),
if \(\Gamma, u_{1}, u_{2} : \varphi \vdash u_{1} = u_{2}\)
holds. For a proposition \(\Gamma \vdash \varphi\), we say
\(\varphi\) \emph{holds} if there exists a (unique) inhabitant of
\(\varphi\). For a type \(\Gamma \vdash A\), its \emph{propositional
truncation} \cite{awodey2004propositions} is a proposition \(\Gamma \vdash \|A\|\) equipped with a
constructor \(\Gamma, a : A \vdash |a| : \|A\|\) such that, for
every proposition \(\Gamma \vdash \varphi\), the function \(\Gamma
  \vdash \lambda fa.f(|a|) : (\|A\| \to \varphi) \to (A \to \varphi)\)
is an isomorphism. Propositions are closed under empty type,
cartesian products and dependent products along arbitrary types, and
we write \(\bot, \top, \varphi \land \psi, \forall_{x :
  A}\varphi(x)\) for \(\myEmptytype, \myUnittype, \varphi \times \psi,
  \prod_{x : A}\varphi(x)\), respectively, when emphasizing that they
are propositions. Also the identity type \(\myId(A, a_{0}, a_{1})\)
is a proposition because it is extensional, and often written
\(a_{0} = a_{1}\). The other logical operators are defined using
propositional truncation as \(\varphi \lor \psi := \|\varphi +
  \psi\|\) and \(\exists_{x : A}\varphi(x) := \|\sum_{x :
  A}\psi(x)\|\). One can show that these logical operations satisfy
the derivation rules of first-order intuitionistic logic. Moreover,
the type theory admits subset comprehension defined as
\[\Gamma \vdash \{x : A \mid \varphi(x)\} := \sum_{x : A}\varphi(x)\]
for a proposition \(\Gamma, x : A \vdash \varphi(x)\).

A finite coproduct \(A + B\) is said to be \emph{disjoint} if the
inclusions \(\myinl : A \to A + B\) and \(\myinr : B \to A + B\) are
monic and \(\forall_{a : A}\forall_{b : B}\myinl(a) \neq \myinr(b)\)
holds. A proposition \(\Gamma \vdash \varphi\) is said to be
\emph{decidable} if \(\Gamma \vdash \varphi \lor \neg \varphi\) holds. If the
coproduct \(\myBoolean := \myUnittype + \myUnittype\) of two copies
of the unit type is disjoint, then it is a \emph{decidable subobject
classifier}: for every decidable proposition \(\Gamma \vdash \varphi\),
there exists a unique term \(\Gamma \vdash b : \myBoolean\) such that
\(\Gamma \vdash \varphi \myfromto (b = 1)\)
holds. For readability we identify a boolean value \(b :
  \myBoolean\) with the proposition \(b = 1\).

For a functor \(H : \mycat{E} \to \mycat{F}\) between the underlying
categories of categories with families \(\mycat{E}\) and
\(\mycat{F}\), a \emph{dependent right adjoint} \cite{clouston2018modal} to
\(H\) consists of, for each context \(\Gamma \in \mycat{E}\) and
type \(A \in \mycat{F}(H\Gamma)\), a type \(G_{\Gamma}A \in
  \mycat{E}(\Gamma)\) and an isomorphism \(\varphi_{A} :
  \mycat{F}(H\Gamma \vdash A) \cong \mycat{E}(\Gamma \vdash
  G_{\Gamma}A)\) that are stable under reindexing in the sense that,
for any morphism \(\sigma : \Delta \to \Gamma\), we have
\((G_{\Gamma}A)\sigma = G_{\Delta}(A\sigma)\) and
\((\varphi_{A}a)\sigma = \varphi_{A\sigma}(a\sigma)\) for any \(a
  \in \mycat{F}(H\Gamma \vdash A)\). One can show that \(H\) preserves
all colimits whenever it has a dependent right adjoint.  As a
consequence, assuming the exponential functor \((\myI \to -)\) has a
dependent right adjoint, the interval \(\myI\) is connected
\[\forall_{\varphi : \myI \to \myBoolean}(\forall_{i : \myI}\varphi
  i) \lor (\forall_{i : \myI}\neg \varphi i),\] which is postulated in
\cite{orton2016axioms} as an axiom.

A \emph{universe} (\`{a} la Tarski) is a type \({} \vdash U\) equipped with
a type \(U \vdash \myel_{U}\). We often omit the subscript
\({}_{U}\) and simply write \(\myel\) for \(\myel_{U}\) if the
universe is clear from the context. The universe \(U\) is said to be
\emph{propositional} if \(U \vdash \myel_{U}\) is a proposition. An
\emph{impredicative universe} is a universe \(U\) equipped with the following
operations.
\begin{itemize}
\item A term \(A : U, B : \myel(A) \to U \vdash \sum^{U}(A, B) : U\)
equipped with an isomorphism \(A : U, B : \myel(A) \to U \vdash
    e : \myel(\sum^{U}(A, B)) \cong \sum_{x : \myel(A)}\myel(Bx)\).
\item A term \(A : U, a_{0}, a_{1} : \myel(A) \vdash \myId^{U}(A,
    a_{0}, a_{1}) : U\) equipped with an isomorphism \(A : U, a_{0},
    a_{1} : \myel(A) \vdash e : \myel(\myId^{U}(A, a_{0}, a_{1}))
    \cong (a_{0} = a_{1})\).
\item For every type \(\Gamma \vdash A\), a term \(\Gamma, B :
    \myel(A) \to U \vdash \prod^{U}(A, B) : U\) equipped with an
isomorphism \(\Gamma, B : \myel(A) \to U \vdash e :
    \myel(\prod^{U}(A, B)) \cong \prod_{x : A}\myel(Bx)\).
\end{itemize}
One might want to require that \(\myel(\sum^{U}(A, B))\) is equal to
\(\sum_{x : \myel(A)}\myel(Bx)\) on the nose rather than up to
isomorphism, but in the category of assemblies described in
\cref{sec:orga3e7717}, the impredicative universe of partial
equivalence relations does not satisfy this equation. For this
reason, the distinction between terms \(A : U\) and types
\(\myel(A)\) is necessary, but for readability we often identify
a term \(A : U\) with the type \(\myel(A)\).
For example, in \cref{org0fe33f4} some
\(\myel\)'s should be inserted formally. Also \cref{org629a444}
formally means that there exists a term \(\varphi, \psi : \myCof
  \vdash \lor^{\myCof}(\varphi, \psi) : \myCof\) such that \(\varphi,
  \psi : \myCof \vdash \myel(\lor^{\myCof}(\varphi, \psi)) \myfromto
  (\myel(\varphi) \lor \myel(\psi))\) holds.

Almost all the axioms in \cref{orgda6ce3b} are direct
translations of those in \cite{orton2016axioms,orton2017axioms}.
Strictly speaking,
\cref{orge92c1ca,orgc87da6c,org629a444,org8e1319f,org4e8c5eb} are part
of
structures rather than axioms in our setting, because \(\myCof\) is
no longer a subobject of the subobject classifier. Also
\cref{org0fe33f4}, called the \emph{isomorphism extension axiom}, is
part of structures. As already mentioned, the connectedness of the
interval \(\myI\) follows from the existence of the right adjoint to
the exponential functor \((\myI \to -)\). We need
\cref{org46f619f}, which asserts the extensionality of the
propositional universe \(\myCof\), for fibration structures on
identity types. This axiom trivially holds in case that \(\myCof\)
is a subobject of the subobject classifier in an elementary
topos. We also note that \(\myCof\) is closed under \(\bot\),
\(\top\) and \(\land\) using \cref{org09c88da,orgc87da6c,org8e1319f}.

\section{Modeling Cubical Type Theory}
\label{sec:orgb4d8a75}
We describe how to construct a model of a variant of cubical type
theory in our setting following Orton and Pitts
\cite{orton2016axioms}. Throughout the section \(\mycat{E}\) will be a
model of dependent type theory satisfying the conditions explained
in \cref{sec:org52feb89}. Type-theoretic notations in this
section are understood in the internal language of \(\mycat{E}\).

Cubical type theory is an extension of dependent type theory with an
\emph{interval object} {\cite[Section 3]{cohen2016cubical}}, the \emph{face lattice} {\cite[Section 4.1]{cohen2016cubical}},
\emph{systems} {\cite[Section 4.2]{cohen2016cubical}}, \emph{composition operations} {\cite[Section 4.3]{cohen2016cubical}} and the
\emph{gluing operation} {\cite[Section 6]{cohen2016cubical}}. It also has several type formers
including dependent product types, dependent sum types, \emph{path types}
{\cite[Section 3]{cohen2016cubical}} and, optionally, \emph{identity types} {\cite[Section 9.1]{cohen2016cubical}}. We make
some modifications to the original cubical type theory
\cite{cohen2016cubical} in the same way as Orton and Pitts
\cite{orton2016axioms}. Major differences are as follows.
\begin{enumerate}
\item In \cite{cohen2016cubical} the interval object \(\myI\) is a de
Morgan algebra, while we only require that \(\myI\) is a path
connection algebra.
\item Due to the lack of de Morgan involution, we need composition
operations in both directions ``from \(0\) to \(1\)'' and
``from \(1\) to \(0\)''.
\end{enumerate}
In this section we will construct from \(\mycat{E}\) a new model of
dependent type theory \(\mycat{E}^{F}\) that supports all operations
of cubical type theory.

\subsection{The Face Lattice and Systems}
\label{sec:orgd0c7e84}

The \emph{face lattice} {\cite[Section 4.1]{cohen2016cubical}} is modeled by the propositional
universe \(\myCof\). Note that in \cite{cohen2016cubical}
quantification \(\forall_{i : \myI}\varphi\) is not part of
syntax and written as a disjunction of irreducible elements, and
plays a crucial role for defining composition operation for gluing.
Since \(\myCof\) need not admit quantifier
elimination, we explicitly require \cref{org4e8c5eb}.

We use the following operation for modeling \emph{systems} {\cite[Section 4.2]{cohen2016cubical}}
which allows one to amalgamate compatible partial functions.

\begin{proposition}
\label{org77abe69}
One can derive an operation
\begin{mathpar}
  \inferrule
  {\Gamma \vdash A \\
    \Gamma \vdash \varphi_{i} \ \myProp \\
    \Gamma, u_{i} : \varphi_{i} \vdash a_{i}(u_{i}) : A \\
    \Gamma, u : \varphi_{i}, u' : \varphi_{j} \vdash a_{i}(u) = a_{j}(u') \\
    \text{(\(i\) and \(j\) run over \(\{1, \dots, n\}\))}}
  {\Gamma \vdash [(u_{1} : \varphi_{1}) \mapsto a_{1}(u_{1}), \dots, (u_{n} : \varphi_{n}) \mapsto a_{n}(u_{n})] : \varphi_{1} \lor \dots \lor \varphi_{n} \to A}
\end{mathpar}
such that \(\Gamma, v : \varphi_{i} \vdash [(u_{1} : \varphi_{1})
   \mapsto a_{1}(u_{1}), \dots, (u_{n} : \varphi_{n}) \mapsto
   a_{n}(u_{n})]v = a_{i}(v)\) for \(i = 1, \dots, n\).
\end{proposition}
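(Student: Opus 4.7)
The plan is to construct the amalgamation by reducing to a family that is propositional, where the universal property of propositional truncation applies cleanly. Writing $\varphi := \varphi_1 \lor \dots \lor \varphi_n$, recall that by the definition in the paper this equals $\|\varphi_1 + \dots + \varphi_n\|$. For $v : \varphi$, consider the auxiliary family
\[
 T(v) := \sum_{a : A}\; \bigwedge_{i = 1}^{n} \forall_{u : \varphi_i}\, a = a_i(u).
\]
Each conjunct is a proposition since identity types are extensional and propositions are closed under dependent products, so the second component of $T(v)$ is a proposition.

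I would first show that $T(v)$ is itself a proposition. Given $(a, p), (a', p')$ in $T(v)$, the goal $a = a'$ is a proposition, so by the universal property of $\|-\|$ applied to $v$, it suffices to produce a witness assuming an inhabitant of $\varphi_1 + \dots + \varphi_n$. Any such inhabitant has the form $\iota_{i_0}(u_0)$ for some $u_0 : \varphi_{i_0}$, and then $p$ and $p'$ give $a = a_{i_0}(u_0) = a'$.

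Next I would construct an inhabitant of $T(v)$. Since $T(v)$ is a proposition, by truncation elimination it suffices to give a map $(\varphi_1 + \dots + \varphi_n) \to T(v)$, which is done by case analysis: on the $i_0$-th summand, send $u_0 : \varphi_{i_0}$ to the pair whose first component is $a_{i_0}(u_0)$ and whose second component is the required family of equalities, each instance being provided directly by the hypothesized compatibility $\Gamma, u : \varphi_{i_0}, u' : \varphi_i \vdash a_{i_0}(u) = a_i(u')$. Define $[(u_1 : \varphi_1) \mapsto a_1(u_1), \dots, (u_n : \varphi_n) \mapsto a_n(u_n)](v)$ to be the first projection of this inhabitant.

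For the computation rule, note that for $\Gamma, v : \varphi_i$ the term $|\iota_i(v)|$ belongs to $\varphi$, and the second component of $T(|\iota_i(v)|)$ is exactly a proof that the first projection equals $a_i(v)$; since $T$ is a proposition this forces the desired judgemental (or up to equality, using the extensional identity types assumed throughout) equation. Stability under base change is automatic because the whole construction is carried out in the internal language. I do not anticipate a real obstacle here: the only subtlety is choosing the right propositional family $T(v)$ so that truncation elimination can be used despite $A$ not being a proposition, and that choice is dictated by the compatibility hypothesis.
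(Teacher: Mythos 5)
Your proof is correct and follows essentially the same strategy as the paper's: replace $A$ by a propositional subtype of $A$ so that the universal property of propositional truncation can be applied, then project back to $A$. The only (cosmetic) difference is the choice of subtype --- the paper takes the union of images $B := \{a : A \mid \bigvee_i \exists_{u : \varphi_i}\, a_i(u) = a\}$, which is a proposition already in context $\Gamma$, whereas your $T$, defined with $\forall$ in place of $\exists$, is a proposition only in the extended context $\Gamma, v : \varphi_1 \lor \dots \lor \varphi_n$, which is still exactly where you need it.
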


\begin{proof}
Let \(B\) denote the union of images of \(a_{i}\)'s:
\[\Gamma \vdash B := \{a : A \mid (\exists_{u_{1} :
   \varphi_{1}}a_{1}(u_{1}) = a) \lor \dots \lor (\exists_{u_{n} :
   \varphi_{n}}a_{n}(u_{n}) = a)\}.\]
Then \(\Gamma \vdash B\) is a proposition because \(\Gamma, u :
   \varphi_{i}, u' : \varphi_{j} \vdash a_{i}(u) = a_{j}(u')\) for all
\(i\) and \(j\). Hence the function \([a_{1}, \dots, a_{n}] :
   \varphi_{1} + \dots + \varphi_{n} \to B\) induces a function
\(\|\varphi_{1} + \dots + \varphi_{n}\| \to B\).
\end{proof}

\subsection{Fibrations}
\label{sec:orgd04bdaa}
We regard the type of Boolean values \(\myBoolean\) as a subtype of
the interval \(\myI\) via the end-point inclusion \([0, 1] :
   \myBoolean \cong \myUnittype + \myUnittype \to \myI\). We define a
term \(e : \myBoolean \vdash \bar{e} : \myBoolean\) as \(\bar{0}
   = 1\) and \(\bar{1} = 0\).

\begin{definition}
For a type \(\Gamma, i : \myI \vdash A(i)\), we define a type of
\emph{composition structures} as
\[\Gamma \vdash \myComp^{i}(A(i)) := {} \prod_{e :
   \myBoolean}\prod_{\varphi : \myCof}\prod_{f : \varphi \to
   \prod_{i : \myI}A(i)}\prod_{a : A(e)} (\forall_{u : \varphi}fue =
   a) \to \{a' : A(\bar{e}) \mid \forall_{u : \varphi}fu\bar{e} =
   a'\}.\]
In this notation, the variable \(i\) is considered to be bound.
\end{definition}

\begin{definition}
For a type \(\gamma : \Gamma \vdash A(\gamma)\), we define a type of \emph{fibration
structures} as
\[{} \vdash \myFib(A) := \prod_{p : \myI \to
   \Gamma}\myComp^{i}(A(pi)).\]
A \emph{fibration} is a type \(\Gamma \vdash A\) equipped with a global
section \({} \vdash \alpha : \myFib(A)\).
\end{definition}

For a fibration structure \(\alpha : \myFib(A)\) on a type
\(\gamma : \Gamma \vdash A(\gamma)\) and a morphism \(\sigma :
   \Delta \to \Gamma\), we define
a fibration structure \(\alpha\sigma : \myFib(A\sigma)\) on
\(\delta : \Delta \vdash A(\sigma(\delta))\) as
\[\alpha\sigma = \lambda p.\alpha(\sigma \circ p) : \prod_{p :
   \myI \to \Delta}\myComp^{i}(A(\sigma(pi))).\]
Thus, for a fibration \((A, \alpha)\) on \(\Gamma\), we have its
\emph{base change} \((A\sigma, \alpha\sigma)\) along a morphism
\(\sigma : \Delta \to \Gamma\). With this base change operation
we get a model \(\mycat{E}^{F}\) of dependent type theory where
\begin{itemize}
\item the contexts are those of \(\mycat{E}\);
\item the types over \(\Gamma\) are fibrations over \(\Gamma\);
\item the terms of a fibration \(\Gamma \vdash A\) are terms of the
underlying type \(\Gamma \vdash A\) in \(\mycat{E}\)
\end{itemize}
together with a forgetful map \(\mycat{E}^{F} \to \mycat{E}\).
In the same way as Orton and Pitts \cite{orton2016axioms}, one can
show the following.

\begin{theorem}
\label{org6d3dfea}
The model of dependent type theory \(\mycat{E}^{F}\) supports:
\begin{itemize}
\item composition operations, path types and identity types; and
\item dependent product types, dependent sum types, unit type and
finite coproducts preserved by the forgetful map \(\mycat{E}^{F}
     \to \mycat{E}\).
\end{itemize}
\end{theorem}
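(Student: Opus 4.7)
The plan is to follow the Orton--Pitts construction \cite{orton2016axioms,orton2017axioms} type former by type former: for each type former preserved by the forgetful map, take the underlying type produced by $\mycat{E}$ and equip it with a fibration structure built from those of the constituents; for the remaining operations (composition, path, identity), work internally with $\myI$ and $\myCof$. Preservation by the forgetful map then holds by construction whenever the underlying carrier is literally the type produced by $\mycat{E}$.

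Composition operations come essentially for free: unfolding the definition of $\myFib(A)$, a fibration $(A,\alpha)$ is already equipped, for every $p : \myI \to \Gamma$, every $e : \myBoolean$, every $\varphi : \myCof$, every partial filler $f$ and every compatible base $a$, with an endpoint in $A(p\bar{e})$ refining $f$. One just has to rebind these parameters to match the syntactic composition rule and check stability under substitution. For $\Pi$-types I would take the underlying type to be $\prod_{x:A}B(x)$ of $\mycat{E}$ and define its fibration structure pointwise in the argument, transporting each $x$ along the base path using composition in $A$ in the opposite direction; this is exactly where the modification mentioned in the introduction---having $\myComp$ in both directions $e = 0$ and $e = 1$---is needed. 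For $\Sigma$-types the structure is obtained by first composing in $A$ to transport the first component and then composing in $B$ along the resulting path, crucially using that $\myComp^{i}$ produces strictly compatible fillers (the subset comprehension inside the codomain of $\myComp$). The unit type is fibrant by the unique map into $\myUnittype$. For finite coproducts, the dependent right adjoint to $(\myI \to -)$ yields connectedness of $\myI$, so any $p : \myI \to A + B$ factors uniquely through $\myinl$ or $\myinr$; the fibration structure on $A + B$ is then given by case analysis, using the summand fibration and \cref{org77abe69} to amalgamate the boundary data.

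For path types I would define $\myPath\, A(a_{0}, a_{1}) := \{p : \myI \to A \mid p0 = a_{0} \land p1 = a_{1}\}$ in $\mycat{E}$, where $A$ is exponentiated pointwise along $\myI$. Its fibration structure is assembled from the fibration structure of $A$ together with the end-point constraints, packaged as a system via \cref{org77abe69}; this uses that $(i = 0)$ and $(i = 1)$ lie in $\myCof$ by \cref{orge92c1ca,orgc87da6c} and that $\myCof$ is closed under $\lor$ by \cref{org629a444}. The main obstacle is the identity type, following Cohen et al.\ \cite[Section 9.1]{cohen2016cubical} and Swan: I would take $\myId\, A(a_{0}, a_{1})$ to consist of pairs of a path and a cofibration under which the path is definitionally constant, with $\myrefl$ assigned the maximal cofibration $\top$ so that the judgmental equalities for $\myrefl$ hold on the nose. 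Equipping this with a fibration structure requires the closure of $\myCof$ under dependent sums over cofibrations (\cref{org8e1319f}) to build the witnessing cofibration of the composite, together with the extensionality of $\myCof$ (\cref{org46f619f}) to identify the resulting cofibrations.

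The hard part is the identity type construction: it is the only step where the interaction between strict equalities (for $\myrefl$) and fibrancy is non-trivial, and it forces the precise shape of the axioms on $\myCof$. Stability under substitution for every operation reduces to stability of the constituent data and of the operations provided by \cref{org77abe69} and by the dependent right adjoint, which follow because all constructions are carried out in the internal language of $\mycat{E}$.
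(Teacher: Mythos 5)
Your proposal is correct and takes essentially the same route as the paper, which itself offers no proof beyond deferring to Orton and Pitts: your type-former-by-type-former construction is exactly theirs, and you correctly identify the two points specific to this setting (composition in both directions for \(\Pi\)- and \(\Sigma\)-types due to the missing de Morgan involution, and connectedness of \(\myI\) via the dependent right adjoint for coproducts) as well as Swan's identity type with \cref{org8e1319f,org46f619f}. The one thing to state explicitly is that for \(\Pi\)- and \(\Sigma\)-types you need the derived \emph{filling} operation --- an entire dependent path over \(\myI\), obtained from composition together with the connections \(\sqcap\), \(\sqcup\) and \cref{orge92c1ca,orgc87da6c} --- rather than merely the transported endpoint, since the partial data must be evaluated at every \(i : \myI\).
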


We also introduce a class of objects that automatically carry
fibration structures.

\begin{definition}
A type \({} \vdash A\) is said to be \emph{discrete} if \(\forall_{f :
   \myI \to A}\forall_{i : \myI}fi = f0\) holds.
\end{definition}

\begin{proposition}
\label{org96b19e9}
If \({} \vdash A\) is a discrete type, then it has a fibration
structure.
\end{proposition}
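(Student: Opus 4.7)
The plan is to equip $A$ with the \emph{trivial} composition structure: given any filling problem with base point $a$, the filler simply returns $a$ unchanged, and discreteness forces the resulting boundary condition to hold automatically. Since $A$ is a closed type, a fibration structure on $A$ amounts to a single element of $\myComp^{i}(A)$, for the map $\myI \to \cdot$ is unique and the bound variable $i$ is vacuous because $A$ does not depend on it.

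So I must supply, for each $e : \myBoolean$, $\varphi : \myCof$, partial path $f : \varphi \to \myI \to A$, base point $a : A$ and witness $w : \forall_{u : \varphi} f(u)(e) = a$, an element $a' : A$ together with a proof of $\forall_{u : \varphi} f(u)(\bar{e}) = a'$. I take $a' := a$. To verify the boundary condition, fix $u : \varphi$ and consider the map $f(u) : \myI \to A$. Discreteness applied to this map at the indices $\bar{e}$ and $e$ gives the two identities $f(u)(\bar{e}) = f(u)(0)$ and $f(u)(e) = f(u)(0)$, so $f(u)(\bar{e}) = f(u)(e) = a$ by $w$ and transitivity. This defines the required composition structure.

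I do not expect any real obstacle: the proof is purely a matter of unfolding the definitions of $\myFib$, $\myComp^{i}$ and discreteness and applying transitivity of equality. In particular, no Orton--Pitts axiom is used beyond the ambient extensional type theory; the argument exploits only that discreteness pins every value $f(u)(\bar{e})$ to the common point $f(u)(0) = f(u)(e)$, for which the hypothesis $w$ already identifies the value.
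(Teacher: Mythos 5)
Your proof is correct and is essentially identical to the paper's: both take $a' := a$ and observe that discreteness forces $f(u)(\bar{e}) = f(u)(0) = f(u)(e) = a$. The paper merely leaves the appeal to discreteness and the reduction to a single composition structure implicit, which you have spelled out accurately.
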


\begin{proof}
Let \(e : \myBoolean\), \(\varphi : \myCof\), \(f : \varphi
   \to \myinterval \to A\) and \(a :  A\) such that \(\forall_{u :
   \varphi}fue = a\). Then \(a' := a : A\) satisfies \(\forall_{u :
   \varphi}fu\bar{e} = a'\) by the discreteness.
\end{proof}

\subsection{Path Types and Identity Types}
\label{sec:orga62704e}

For a type \(\Gamma \vdash A\) and terms \(\Gamma \vdash a_{0} :
   A\) and \(\Gamma \vdash a_{1} : A\), we define the \emph{path type}
\(\Gamma \vdash \myPath(A, a_{0}, a_{1})\) to be
\[\Gamma \vdash \{p : \myI \to A \mid p0 = a_{0} \land p1 =
   a_{1}\}.\]
We also define the \emph{identity type} \(\Gamma \vdash \myId(A,
   a_{0}, a_{1})\) to be
\[\Gamma \vdash \sum_{p : \myPath(A, a_{0}, a_{1})}\{\varphi :
   \myCof \mid \varphi \to \forall_{i : \myI}pi = a_{0}\}\]
which is a variant of Swan's construction
\cite{swan2016algebraic}. \Cref{org6d3dfea} says that, if
\(A\) has a fibration structure, then so do \(\myPath(A, a_{0},
   a_{1})\) and \(\myId(A, a_{0}, a_{1})\).

In the model \(\mycat{E}^{F}\), both path types and identity types
admit the following introduction and elimination operations:
\begin{mathpar}
  \inferrule
  {\Gamma \vdash a : A}
  {\Gamma \vdash \myrefl_{a} : P(A, a, a)}
  \quad \text{\(P\)-intro}
  \and
  \inferrule
  {\Gamma, x_{0} : A, x_{1} : A, z : P(A, x_{0}, x_{1}) \vdash C(z) \\
    \Gamma, x : A \vdash c(x) : C(\myrefl_{x}) \\
    \Gamma \vdash a_{0} : A \\
    \Gamma \vdash a_{1} : A \\
    \Gamma \vdash p : P(A, a_{0}, a_{1})}
  {\Gamma \vdash \myind_{P(A)}(C, c, p) : C(p)}
  \quad \text{\(P\)-elim}
\end{mathpar}
where \(P\) is either \(\myPath\) or \(\myId\). A difference
between them is their computation rules. Identity types admit the
\emph{judgmental} computation rule like Martin-L\"{o}f's identity
types:
\[\Gamma \vdash \myind_{\myId(A)}(C, c, \myrefl_{a}) = c(a)\]
for a term \(\Gamma \vdash a : A\). On the other hand, path types
only admit the \emph{propositional} computation rule: for a term
\(\Gamma \vdash a : A\), one can find a term
\[\Gamma \vdash H(C, c, a) : \myPath(C(a), \myind_{\myPath(A)}(C,
   c, \myrefl_{a}), c(a)).\]
Therefore, when interpreting homotopy type theory, which is based
on Martin-L\"{o}f's type theory, we use \(\myId(A, a_{0},
   a_{1})\) rather than \(\myPath(A, a_{0}, a_{1})\). However, it can
be shown that \(\myId(A, a_{0}, a_{1})\) and \(\myPath(A, a_{0},
   a_{1})\) are equivalent, and thus we can
replace \(\myId(A, a_{0}, a_{1})\) by simpler type \(\myPath(A,
   a_{0}, a_{1})\) when analyzing the model \(\mycat{E}^{F}\) (see,
for instance, the definition of homotopy proposition in
\cref{sec:org7cd507b}).

\subsection{Universes and Gluing}
\label{sec:org7bb1bad}
For a type \(\gamma : \Gamma \vdash A(\gamma)\), a fibration structure on \(A\)
corresponds to a term of the type \(p : \myI \to \Gamma \vdash
   \myCComp(A)(p) := \myComp^{i}(A(pi))\). We define a type \(\Gamma
   \vdash FA := \myCComp(A)_{\myI}\), using the dependent right
adjoint \((-)_{\myI}\) to the exponential functor \((\myI \to -)\).
By definition a morphism
\(\sigma : \Delta \to \sum_{\Gamma}FA\) corresponds to a pair
\((\sigma_{0}, \alpha)\) consisting of a morphism \(\sigma_{0} :
   \Delta \to \Gamma\) and a fibration structure \({} \vdash \alpha :
   \prod_{p : \myI \to \Delta}\myComp^{i}(A(\sigma_{0}(pi)))\).

Using this construction for the universe \(\myUniverse \vdash
   \myel_{\myUniverse}\), we have a new universe \(\myUniverse^{F} :=
   \sum_{\myUniverse}F(\myel)\) together with a fibration \((A,
   \alpha) : \myUniverse^{F} \vdash \myel_{\myUniverse^{F}}(A, \alpha)
   := \myel_{\myUniverse}(A)\). By definition \(\myUniverse^{F}\)
classifies fibrations whose underlying types belong to
\(\myUniverse\).

\begin{theorem}
\label{orgbefc7b4}
The universe \(\myUniverse^{F}\) is closed under dependent product
types along arbitrary fibrations, dependent sum types and path
types. If \(\myCof\) belongs to \(\myUniverse\), then
\(\myUniverse^{F}\) is closed under identity types.
\end{theorem}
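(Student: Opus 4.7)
The plan is to handle each type former uniformly. A term of $\myUniverse^F = \sum_{\myUniverse}F(\myel)$ is a pair $(A, \alpha)$ consisting of a code $A : \myUniverse$ together with a fibration structure $\alpha$ on its decoding, so closing $\myUniverse^F$ under a type former amounts to producing (i) a $\myUniverse$-code for the underlying type, which I will extract from the impredicative operators $\sum^{\myUniverse}$, $\prod^{\myUniverse}$ and $\myId^{\myUniverse}$ on $\myUniverse$, and (ii) a fibration structure on its decoding, which I will extract from \Cref{org6d3dfea}.

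For a dependent product along an arbitrary fibration $\Gamma \vdash A$ with $\Gamma, x : A \vdash (B, \beta) : \myUniverse^F$, the $\myUniverse$-code is $\prod^{\myUniverse}(A, B)$, using crucially that $\myUniverse$ is closed under dependent products along \emph{any} type. The fibration structure on $\prod_{x : A}\myel(Bx)$ is supplied by \Cref{org6d3dfea} from the fibration structure on $A$ and the pointwise structures $\beta$, and is then transported along the decoding isomorphism $\myel(\prod^{\myUniverse}(A, B)) \cong \prod_{x : A}\myel(Bx)$ to equip $\myel(\prod^{\myUniverse}(A, B))$. Dependent sums are analogous via $\sum^{\myUniverse}$, with both $A$ and $B$ now coming from $\myUniverse^F$. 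For path types $\myPath(A, a_0, a_1)$ with $(A, \alpha) : \myUniverse^F$, the underlying type $\sum_{p : \myI \to A}((p0 = a_0) \land (p1 = a_1))$ can be assembled from $\prod^{\myUniverse}$ (along $\myI$), $\sum^{\myUniverse}$ and $\myId^{\myUniverse}$, and again paired with the fibration structure produced by \Cref{org6d3dfea}.

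The same template applies to identity types $\myId(A, a_0, a_1) = \sum_{p : \myPath(A, a_0, a_1)}\{\varphi : \myCof \mid \varphi \to \forall_{i : \myI}pi = a_0\}$, with one caveat: the Swan refinement introduces $\myCof$ as a type appearing inside $\sum^{\myUniverse}$ and $\prod^{\myUniverse}$, and so requires $\myCof$ itself to be coded in $\myUniverse$. This is exactly the role of the extra hypothesis $\myCof \in \myUniverse$: with it, $\myCof$ becomes available to the impredicative operators and the whole Swan construction can be coded inside $\myUniverse$, then paired with its fibration structure from \Cref{org6d3dfea}. I expect the main obstacle to be bookkeeping rather than conceptual depth, namely verifying that the pairing of a $\myUniverse$-code with its fibration structure is stable under arbitrary base change $\sigma : \Delta \to \Gamma$; this reduces to stability of the impredicative operators on $\myUniverse$, which holds by definition, together with stability of the fibration structures supplied by \Cref{org6d3dfea}, which is already part of that theorem.
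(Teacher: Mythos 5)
Your proposal is correct and matches the paper's argument, which simply observes that by \cref{org6d3dfea} it suffices to check that \(\myUniverse\) itself is closed under the underlying type constructors, which holds by the definition of an impredicative universe (with \(\myCof : \myUniverse\) needed for the Swan-style identity type, exactly as you note). You have merely spelled out the details that the paper leaves implicit.
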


\begin{proof}
By \cref{org6d3dfea}, it suffices to show
that \(\myUniverse\) is closed under those type
constructors, but this is clear by definition.
\end{proof}

We describe the \emph{gluing operation} on the universe
\(\myUniverse^{F}\) following Orton and Pitts
\cite{orton2016axioms}.

For a proposition \(\Gamma \vdash \varphi\), types \(\Gamma, u :
   \varphi \vdash A(u)\) and \(\Gamma \vdash B\) and a function
\(\Gamma, u : \varphi \vdash f(u) : A(u) \to B\), we define a type
\(\myGlue(\varphi, f)\) to be
\[\Gamma \vdash \myGlue(\varphi, f) := \sum_{a : \prod_{u :
   \varphi}A(u)}\{b : B \mid \forall_{u : \varphi}f(u)(au) = b\}.\]
There is a canonical isomorphism \(\Gamma, u : \varphi \vdash
   e(u) := \lambda(a, b).au : \myGlue(\varphi, f) \cong A(u)\)
with inverse \(\lambda a.(\lambda v.a, f(u)a)\).

\begin{proposition}
\label{orgeaa893e}
For \(\gamma : \Gamma \vdash \varphi(\gamma) : \myCof\), \(\gamma :
   \Gamma, u : \varphi(\gamma)
   \vdash A(u)\), \(\gamma : \Gamma \vdash B(\gamma)\) and \(\gamma : \Gamma, u : \varphi(\gamma) \vdash
   f(u) : A(u) \to B\), if \(A\) and \(B\) are fibrations and \(f\) is
an equivalence, then \(\gamma : \Gamma \vdash \myGlue(\varphi(\gamma), f)\) has a
fibration structure preserved by the canonical isomorphism
\(\Gamma, u : \varphi \vdash e(u) : \myGlue(\varphi, f)
   \cong A(u)\).
\end{proposition}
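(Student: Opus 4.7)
The plan is to adapt the Cohen--Coquand--Huber--M\"ortberg construction of the Glue fibration to the Orton--Pitts axiomatic setting. Fix a path $p : \myI \to \Gamma$; by unfolding the definition, a fibration structure on $\myGlue(\varphi, f)$ amounts to producing, for each such $p$, a composition operation on $\myGlue(\varphi(p i), f(p i))$ regarded as depending on $i : \myI$. Given input $(e, \psi, h, g_{0})$, I would project via the underlying $\Sigma$-type structure of $\myGlue$ into a partial $B$-path $b_{h} : \psi \to \prod_{i : \myI}B(p i)$ with base $b_{0} : B(p e)$, and apply the fibration structure on $B$ to obtain a candidate endpoint $\hat b : B(p \bar e)$.

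To produce an $A$-component $\hat a : \prod_{u : \varphi(p \bar e)} A(u)$ at the other end, I enlarge the cofibration to $\psi' := \psi \lor \forall_{i : \myI}\varphi(p i)$, which is a cofibration by \cref{org629a444,org4e8c5eb}. Over the $\psi$ branch, use the $A$-part of $h$; over the $\forall_{i : \myI}\varphi(p i)$ branch, use $f(u)^{-1}(b_{h}(i))$, available because $f$ is an equivalence. \Cref{org77abe69} amalgamates these into a single partial $A$-path over $\psi'$, and the fibration structure on $A$ composes it along $p$ to yield the desired $\hat a$ when $\varphi(p \bar e)$ holds.

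The main obstacle is the Glue compatibility $\forall_{u : \varphi(p\bar e)} f(u)(\hat a u) = b$ with the chosen $B$-component: naively, $f(u)(\hat a u)$ and $\hat b$ are linked only by the homotopy $f \circ f^{-1} \sim \myid$ supplied by the equivalence data, not strictly equal. Following CCHM, I would therefore perform a second composition in $B$ along this comparison path, whose boundary is assembled via \cref{org77abe69} from the constraints inherited from $\psi$ (where $h$ already pins down the $B$-component) and from the $\varphi(p \bar e)$ branch (where the value must be $f(u)(\hat a u)$). This yields an adjusted $\hat b' : B(p \bar e)$ such that $(\hat a, \hat b')$ lies in $\myGlue(\varphi(p\bar e), f(p\bar e))$ and extends the given data at $\bar e$.

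Preservation by $e(u)$ under the assumption that $\varphi$ holds then follows by tracing the construction: the $\forall_{i : \myI}\varphi(p i)$ disjunct of $\psi'$ is inhabited, so the contribution of that branch forces $\hat a$ to coincide with the $A$-composition of the underlying partial path, while the second $B$-adjustment collapses because $f(u)(\hat a u)$ agrees with $\hat b$ by construction. I expect the main bookkeeping difficulty will be verifying the compatibility side conditions needed to invoke \cref{org77abe69} at both composition steps, since they require checking that every pair of branches of $\psi'$ agrees on the overlap, which in turn relies on the naturality of the equivalence data for $f$.
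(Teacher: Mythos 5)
Your overall architecture (compose in $B$ first, single out the branch $\delta := \forall_{i : \myI}\varphi(pi)$, then repair the $B$-component by a second composition) matches the paper's, but the construction of the $A$-component $\hat a$ has a genuine gap. First, you cannot ``compose it along $p$'' in $A$ when only $\varphi(p\bar e)$ holds: the type $A(u)$ exists only where $\varphi$ holds, so the fibration structure on $A$ is usable only on the branch $\delta$, not for a general $u : \varphi(p\bar e)$. Second, the amalgamation via \cref{org77abe69} requires the branches of $\psi'$ to agree \emph{strictly} on overlaps, and $f(u)^{-1}$ applied to the $B$-data agrees with the $A$-part of $h$ on $\psi \land \delta$ only up to the homotopy $f^{-1} \circ f \sim \mathrm{id}$; so the side condition of \cref{org77abe69} fails, and a quasi-inverse is not enough data to meet the strict boundary constraints. (Your $f(u)^{-1}(b_h(i))$ is also not well typed as written, since $b_h$ is only defined on $\psi$.)

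What the paper uses instead, following CCHM Section~6.2, is the equivalence-\emph{extension} operation $\myequiv$ recalled in the appendix: for each $u : \varphi(p\bar e)$ one extends a partial element of $\sum_{a : A(u)}\myPath(B(p\bar e), \bar b_1, f(u)a)$ --- given on $\delta$ by the $A$-composition $\bar a_1$ together with the coherence path produced by $\myprsv$ (relating $f$ of the $A$-composite to the $B$-composite $\bar b_1$), and on $\psi$ by the already-determined value $gv\bar e u$ with a constant path --- to a total element $(\bar a u, q_2 u)$. This is where being an equivalence, as opposed to merely having an inverse up to homotopy, is genuinely used, and it is exactly what makes the strict equations demanded by $\myGlue$ and by the system operation come out. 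Your final ``second composition in $B$ along the comparison path'' does correspond to the paper's $\mycomp_0^i(B(p\bar e), \ldots, \bar b_1)$ along $q_2$, and your preservation argument is in the right spirit (on $\delta$ the extension restricts to $\bar a_1$, i.e.\ to the $A$-composition), but both steps presuppose that $\hat a$ and the comparison paths were produced by the extension operation rather than by pointwise application of $f^{-1}$.
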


\begin{proof}
The construction is similar to the definition of the composition
operation for glue types {\cite[Section 6.2]{cohen2016cubical}}.
\end{proof}

Since the universe \(\myUniverse\) is closed under type formers
used in the definition of \(\myGlue(\varphi, f)\), we get a term
\[\varphi : \myCof, A : \varphi \to \myUniverse, B : \myUniverse,
   f : \prod_{u : \varphi}A(u) \to B \vdash \myGlue(\varphi, f) :
   \myUniverse\]
such that \(\prod_{u : \varphi}\myGlue(\varphi, f) \cong
   A(u)\). However, the gluing operation in cubical type theory
{\cite[Section 6]{cohen2016cubical}} requires that, assuming \(u : \varphi\),
\(\myGlue(\varphi, f)\) is equal to \(A(u)\) on the nose rather
than up to isomorphism. So we use \cref{org0fe33f4} and
get a term
\[\varphi : \myCof, A : \varphi \to \myUniverse, B : \myUniverse,
   f : \prod_{u : \varphi}A(u) \to B \vdash \mySGlue(\varphi, f) :
   \myUniverse\]
such that \(\mySGlue(\varphi, f) \cong \myGlue(\varphi, f)\) and
\(\forall_{u : \varphi}\mySGlue(\varphi, f) = A(u)\). By
\cref{orgeaa893e} we also have a term
\[\varphi : \myCof, A : \varphi \to \myUniverse^{F}, B :
   \myUniverse^{F}, f : \prod_{u : \varphi}A(u) \simeq B \vdash
   \mySGlue(\varphi, f) : \myUniverse^{F}\]
such that \(\mySGlue(\varphi, f) \cong \myGlue(\varphi, f)\) and
\(\forall_{u : \varphi}\mySGlue(\varphi, f) = A(u)\). Hence the
universe \(\myUniverse^{F}\) in the model \(\mycat{E}^{F}\)
supports the gluing operation. The composition operation for
universes is defined using the gluing operation {\cite[Section 7.1]{cohen2016cubical}}, so we
have the following proposition.

\begin{proposition}
\label{org97bc451}
\({} \vdash \myUniverse^{F}\) has a fibration structure.
\end{proposition}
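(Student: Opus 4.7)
The plan is to follow Orton--Pitts' adaptation of the CCHM composition on the universe of fibrations. Unfolding definitions, a fibration structure on the closed type $\myUniverse^{F}$ amounts to an operation that, given $e : \myBoolean$, $\varphi : \myCof$, a partial line $f : \varphi \to \myI \to \myUniverse^{F}$ and an endpoint $a : \myUniverse^{F}$ with $\forall_{u : \varphi}\, f u e = a$, returns $a' : \myUniverse^{F}$ with $\forall_{u : \varphi}\, f u \bar{e} = a'$. Crucially, since $\myUniverse^{F}$ classifies fibrations, each $f u : \myI \to \myUniverse^{F}$ \emph{is} a fibration over $\myI$ (under the assumption $u : \varphi$), so transport along $f u$ is available via its own composition structure.

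First, for each $u : \varphi$, I would use the fibration structure of $f u$ to build a transport equivalence $\pi_{0}(a) \simeq \pi_{0}(f u \bar{e})$ between the underlying types: the compatibility $a = f u e$ identifies $\pi_{0}(a)$ with the fibre of $f u$ at $e$, and compositions along $f u$ in both directions $e \to \bar{e}$ and $\bar{e} \to e$ yield mutually quasi-inverse maps. Next, I feed $(\varphi,\, \lambda u.\, \pi_{0}(f u \bar{e}),\, \pi_{0}(a),\, \text{equivalences})$ into $\mySGlue$, obtaining a type $A' : \myUniverse$ with $\forall_{u : \varphi}\, A' = \pi_{0}(f u \bar{e})$ strictly, by the isomorphism extension axiom (\cref{org0fe33f4}). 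Since all inputs are fibrations, \cref{orgeaa893e} equips $\myGlue(\varphi, -) \cong \mySGlue(\varphi, -)$ with a fibration structure $\alpha'$, preserved strictly by the canonical isomorphism to $f u \bar{e}$ on $\varphi$. Setting $a' := (A', \alpha')$ then delivers the required extension.

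The main obstacle is the first step: turning raw transport along $f u$ into a genuine equivalence with coherent inverse, suitable for feeding into $\mySGlue$. This is the standard \emph{equiv} construction of \cite[Section 7.1]{cohen2016cubical}, made possible here because our definition of $\myComp^{i}$ supplies compositions in both directions, which is exactly why the interval direction $e$ is quantified over $\myBoolean$. Everything else---stability under base change, and the preservation clause of \cref{orgeaa893e} that makes the strict boundary agree with $f u \bar{e}$ at the level of fibration structures, not merely underlying types---is automatic from the stability of the components used.
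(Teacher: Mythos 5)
Your proposal is correct and follows essentially the same route as the paper's own proof: take the partial family $\lambda u.\, f u \bar{e}$, use transport along $f u$ (an equivalence by the CCHM argument) to relate it to the known endpoint $a = f u e$, and then apply $\mySGlue$ (isomorphism extension plus \cref{orgeaa893e}) to obtain $a'$ strictly extending $\lambda u.\, f u \bar{e}$. The paper packages the last two steps into the already-lifted $\mySGlue$ on $\myUniverse^{F}$, whereas you unfold them into underlying type plus fibration structure, but the construction is the same.
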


Since the univalence axiom can be derived from
the gluing operation {\cite[Section 7]{cohen2016cubical}}, we conclude that
\(\myUniverse^{F}\) is a univalent and impredicative universe in
the model of cubical type theory \(\mycat{E}^{F}\).

\section{Presheaf Models}
\label{sec:org3425270}
In this section we give a sufficient condition for a presheaf
category to satisfy the conditions in
\cref{sec:org52feb89}. We will work in a model \(\mycat{S}\) of
dependent type theory with dependent product types, dependent sum
types, extensional identity types, unit type, disjoint finite
coproducts and propositional truncation.

A \emph{category} in \(\mycat{S}\) consists of:
\begin{itemize}
\item a type \({} \vdash \myicat{C}_{0}\) of \emph{objects};
\item a type \(c_{0}, c_{1} : \myicat{C}_{0} \vdash
    \myicat{C}_{1}(c_{0}, c_{1})\) of \emph{morphisms};
\item a term \(c : \myicat{C}_{0} \vdash \myid_{c} : \myicat{C}_{1}(c,
    c)\) called \emph{identity};
\item a term \(c_{0}, c_{1}, c_{2} : \myicat{C}_{0}, g :
    \myicat{C}_{1}(c_{1}, c_{2}), f : \myicat{C}_{1}(c_{0}, c_{1})
    \vdash gf : \myicat{C}_{1}(c_{0}, c_{2})\) called
\emph{composition}
\end{itemize}
satisfying the standard axioms of category. We will simply write
\(\myicat{C}\) and \(\myicat{C}(c_{0}, c_{1})\) for
\(\myicat{C}_{0}\) and \(\myicat{C}_{1}(c_{0}, c_{1})\)
respectively.  The notions of functor and natural transformation
in \(\mycat{S}\) are defined in the obvious way. For a category
\(\myicat{C}\) in \(\mycat{S}\), a \emph{presheaf} on \(\myicat{C}\)
consists of:
\begin{itemize}
\item a type \(c : \myicat{C} \vdash A(c)\);
\item a term \(c_{0}, c_{1} : \myicat{C}, \sigma : \myicat{C}(c_{0},
    c_{1}), a : A(c_{1}) \vdash a\sigma : A(c_{0})\) called \emph{(right)
\(\myicat{C}\)-action}
\end{itemize}
satisfying \(a\myid = a\) and \(a(\sigma\tau) = (a\sigma)\tau\). For
presheaves \(A\) and \(B\), a \emph{morphism} \(f : A \to B\) is a term
\(c : \myicat{C}, a : A(c) \vdash f(a) : B(c)\) satisfying \(c_{0},
  c_{1} : \myicat{C}, \sigma : \myicat{C}(c_{0}, c_{1}), a : A(c_{1})
  \vdash f(a\sigma) = f(a)\sigma\).  For a presheaf \(A\), its
\emph{category of elements}, written \(\mycatEl(A)\), is defined
as
\begin{itemize}
\item \({} \vdash \mycatEl(A)_{0} := \sum_{c : \myicat{C}_{0}}A(c)\);
\item \((c_{0}, a_{0}), (c_{1}, a_{1}) : \mycatEl(A)_{0} \vdash
    \mycatEl(A)_{1}((c_{0}, a_{0}), (c_{1}, a_{1})) :=
    \{\sigma : \myicat{C}_{1}(c_{0}, c_{1}) \mid a_{1}\sigma =
    a_{0}\}\).
\end{itemize}
There is a projection functor \(\pi_{A} : \mycatEl(A) \to
  \myicat{C}\).

For a category \(\myicat{C}\) in \(\mycat{S}\), we describe the
presheaf model \(\myPrshv(\myicat{C})\) of dependent type
theory. Contexts are interpreted as
presheaves on \(\myicat{C}\). For a context \(\Gamma\), types on
\(\Gamma\) are interpreted as presheaves on
\(\mycatEl(\Gamma)\). For a type \(\Gamma \vdash A\), terms
of \(A\) are interpreted as sections of the projection \(\pi_{A} :
  \mycatEl(A) \to \mycatEl(\Gamma)\). For a
type \(\Gamma \vdash A\), the context extension \(\Gamma.A\) is
interpreted as the presheaf \(c : \myicat{C} \vdash \sum_{\gamma :
  \Gamma(c)}A(c, \gamma)\). This construction is also used for
dependent sum types. The dependent product for a type \(\Gamma.A \vdash B\) is the
presheaf
\begin{align*}
  (c, \gamma) : \mycatEl(\Gamma) \vdash {}
  &\{f : \prod_{c' : \myicat{C}}\prod_{\sigma : \myicat{C}(c', c)}\prod_{a : A(c', \gamma\sigma)}B(c', a) \mid \\
  &\forall_{c', c'' : \myicat{C}}\forall_{\sigma : \myicat{C}(c', c)}\forall_{\tau : \myicat{C}(c'', c')}\forall_{a : A(c', \gamma\sigma)}(fc'\sigma a)\tau = fc''(\sigma\tau)(a\tau)\}.
\end{align*}
Extensional identity types, unit type, disjoint finite coproducts
and propositional truncation are pointwise.

\subsection{Lifting Universes}
\label{sec:org2c86a34}

We describe the Hofmann-Streicher lifting of a universe
\cite{hofmann1997lifting}. Let \(\myicat{C}\) be a category in
\(\mycat{S}\) and \(U\) a universe in \(\mycat{S}\). We define a
universe \([\myicat{C}^{\myop}, U]\) in \(\myPrshv(\myicat{C})\) as
follows. The universe \(U\) can be seen as a category whose type of
objects is \(U\) and type of morphisms is \(A, B : U \vdash
   \myel_{U}(A) \to \myel_{U}(B)\). For an object \(c : \myicat{C}\),
we define \([\myicat{C}^{\myop}, U](c)\) to be the type of functors
from \((\myicat{C}/c)^{\myop}\) to \(U\). The \(\myicat{C}\)-action
on \([\myicat{C}^{\myop}, U]\) is given by precomposition. The type
\([\myicat{C}^{\myop}, U] \vdash \myel_{[\myicat{C}^{\myop}, U]}\)
in \(\myPrshv(\myicat{C})\) is defined as \((c, A) :
   \mycatEl([\myicat{C}^{\myop}, U]) \vdash
   \myel_{[\myicat{C}^{\myop}, U]}(c, A) := \myel_{U}(A(\myid_{c}))\).

It is easy to show that, if \(U\) is an impredicative universe,
then dependent product types, dependent sum types and extensional
identity types in \(U\) can be lifted to those in
\([\myicat{C}^{\myop}, U]\) so that \([\myicat{C}^{\myop}, U]\) is
an impredicative universe in \(\myPrshv(\myicat{C})\). If \(U\) is
a propositional universe in \(\mycat{S}\), then
\([\myicat{C}^{\myop}, U]\) is a propositional universe in
\(\myPrshv(\myicat{C})\).

\begin{proposition}
\label{org824da5b}
Let \(\myUniverse\) be an impredicative universe and \(\myCof\) a
propositional universe in \(\mycat{S}\). If they satisfy
\cref{org629a444,org8e1319f,org46f619f,org0fe33f4}, then so do
\([\myicat{C}^{\myop}, \myUniverse]\) and \([\myicat{C}^{\myop},
\myCof]\).
\end{proposition}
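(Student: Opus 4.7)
The plan is to check each of the four axioms separately by unfolding the Hofmann-Streicher lift and reducing to the corresponding axiom in $\mycat{S}$. Recall that an element of $[\myicat{C}^{\myop}, U]$ at an object $c$ is a functor $F : (\myicat{C}/c)^{\myop} \to U$ with element type $\myel(F(\myid_{c}))$, and restriction along a morphism in $\myicat{C}$ is precomposition. So each stage is already a family of elements of $U$, and any structure on $U$ should induce structure on $[\myicat{C}^{\myop}, U]$ pointwise, provided we verify naturality; this reduces to the stability under base change of the operations in $\mycat{S}$.

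For \cref{org629a444,org8e1319f}, I would define $\lor$ and $\sum$ on $[\myicat{C}^{\myop}, \myCof]$ stagewise, by applying the given $\lor^{\myCof}$ and $\sum^{\myCof}$ in $\mycat{S}$ at each object of $\myicat{C}/c$. The required biconditional between the element type of the result and the corresponding disjunction or dependent sum of element types holds objectwise by the hypothesis in $\mycat{S}$ applied at $\myid_{c}$. Functoriality of the resulting presheaves is immediate because $\lor^{\myCof}$ and $\sum^{\myCof}$ are themselves stable under base change.

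For \cref{org46f619f}, propositional extensionality is the easiest. Given $\varphi, \psi : [\myicat{C}^{\myop}, \myCof]$ and a witness of $\varphi \leftrightarrow \psi$ in the presheaf model, unpacking produces, for each $\sigma : c' \to c$, a biconditional $\varphi(\sigma) \leftrightarrow \psi(\sigma)$ in $\mycat{S}$. Axiom \ref{org46f619f} in $\mycat{S}$ then gives $\varphi(\sigma) = \psi(\sigma)$, yielding equality of the object parts of the two functors. Equality of the morphism parts is forced because morphisms in the category $\myCof$ (whose hom type is $A, B : \myCof \vdash \myel(A) \to \myel(B)$) take values in a proposition, so there is at most one.

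The main obstacle is \cref{org0fe33f4}. My approach is to apply $\mylift$ from $\mycat{S}$ stagewise: at each $\sigma : c' \to c$ one forms $\mylift(\varphi(\sigma), f(\sigma))$, producing $\bar{A}(\sigma) : \myUniverse$, $\bar{f}(\sigma) : \bar{A}(\sigma) \cong B(\sigma)$, together with the strict equality $\forall_{u : \varphi(\sigma)}(A(\sigma, u), f(\sigma, u)) = (\bar{A}(\sigma), \bar{f}(\sigma))$. The hard part is to certify that $\sigma \mapsto \bar{A}(\sigma)$ genuinely defines an element of $[\myicat{C}^{\myop}, \myUniverse]$, and similarly for $\bar{f}$. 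This is where stability of $\mylift$ under base change in $\mycat{S}$ is essential: restriction along a morphism $\tau : c'' \to c'$ in the lifted universe is precomposition, and applying $\mylift$ either before or after such precomposition yields definitionally equal data, so $\bar{A}$ is functorial on the nose. The strict-equality clause then transports stagewise to the required lifting equation in the presheaf model, completing the verification.
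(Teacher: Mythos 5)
Your handling of \cref{org629a444,org8e1319f,org46f619f} matches what the paper leaves as ``easy to verify,'' and your observation that the morphism parts of two propositionally equivalent functors into \(\myCof\) agree automatically is correct. The problem is in your treatment of \cref{org0fe33f4}, which is exactly the case the paper works out in detail, and your claimed shortcut does not work.

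Applying \(\mylift\) stagewise at each \(\sigma : \myicat{C}(c', c)\) only produces the \emph{object} part of the would-be functor \(\bar{A} : (\myicat{C}/c)^{\myop} \to \myUniverse\), namely an element \(\bar{A}(\sigma) : \myUniverse\) and an isomorphism \(\bar{f}(\sigma) : \bar{A}(\sigma) \cong B(\sigma)\). You still owe the \emph{morphism} part: for \(\tau : \myicat{C}(c'', c')\) a map \(\tau^{*} : \myel(\bar{A}(\sigma)) \to \myel(\bar{A}(\sigma\tau))\), functorial and compatible with \(\bar{f}\). Stability of \(\mylift\) under base change cannot supply this: base-change stability concerns substitution along morphisms of contexts in \(\mycat{S}\), whereas the inputs at stage \(\sigma\tau\) (i.e.\ \(\varphi(\sigma\tau)\), \(B(\sigma\tau)\), \(f(\sigma\tau, -)\)) are obtained from those at stage \(\sigma\) by the presheaf restriction maps, which are just morphisms in the category \(\myUniverse\), not substitutions. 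So \(\bar{A}(\sigma)\) and \(\bar{A}(\sigma\tau)\) are \emph{a priori} unrelated elements of \(\myUniverse\), and nothing is ``functorial on the nose'' for free. The paper's proof fills this gap by \emph{defining} \(\tau^{*}\) on \(\bar{A}\) as the conjugate \(\bar{f}(\sigma\tau)^{-1} \circ \tau^{*}_{B} \circ \bar{f}(\sigma)\); functoriality of \(\bar{A}\) and naturality of \(\bar{f}\) then hold by construction, and the strict-equality clause of \(\mylift\) together with the naturality of the given \(f\) yields \(\bar{A}\pi_{\varphi} = A\) and \(\bar{f}\pi_{\varphi} = f\) as functors and natural transformations, not merely objectwise. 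Without this explicit construction of the morphism part your argument does not produce an element of \([\myicat{C}^{\myop}, \myUniverse]\) at all.
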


\begin{proof}
We only check \cref{org0fe33f4}. The other axioms are
easy to verify.

We have to define a term \(\varphi : [\myicat{C}^{\myop}, \myCof],
   A : \varphi \to [\myicat{C}^{\myop}, \myUniverse], B :
   [\myicat{C}^{\myop}, \myUniverse], f : \prod_{u : \varphi}Au \cong
   B \vdash (D(\varphi, f), g(\varphi, f)) : \sum_{\bar{A} :
   [\myicat{C}^{\myop}, \myUniverse]}\{\bar{f} : \bar{A} \cong B \mid
   \forall_{u : \varphi}(Au, fu) = (\bar{A}, \bar{f})\}\) in
\(\myPrshv(\myicat{C})\). It corresponds to a natural
transformation that takes an object \(c : \myicat{C}\), functors
\(\varphi : (\myicat{C}/c)^{\myop} \to \myCof\), \(A :
   \mycatEl(\varphi)^{\myop} \to \myUniverse\) and \(B :
   (\myicat{C}/c)^{\myop} \to \myUniverse\) and an isomorphism \(f : A
   \cong B\pi_{\varphi}\) of presheaves on
\(\mycatEl(\varphi)\) and returns a pair \((D(c, \varphi,
   f), g(c, \varphi, f))\) consisting of a functor \(D(c, \varphi, f) :
   (\myicat{C}/c)^{\myop} \to \myUniverse\) and an isomorphism \(g(c,
   \varphi, f) : A \cong B\) of presheaves on
\((\myicat{C}/c)^{\myop}\) such that \(D(c, \varphi,
   f)\pi_{\varphi} = A\) and \(g(c, \varphi, f)\pi_{\varphi} =
   f\). Let \(\sigma : \myicat{C}(c', c)\) be a morphism. Then we have
\(\varphi(\sigma) : \myCof\), \(\lambda u.A(\sigma, u) :
   \varphi(\sigma) \to \myUniverse\), \(B(\sigma) : \myUniverse\) and
an isomorphism \(\lambda u.f(\sigma, u) : \prod_{u :
   \varphi(\sigma)}A(\sigma, u) \cong B(\sigma)\). By the isomorphism
lifting on \(\myUniverse\), we have \(D(c, \varphi, f)(\sigma) :
   \myUniverse\) and an isomorphism \(g(c, \varphi, f)(\sigma) : D(c,
   \varphi, f)(\sigma) \cong B(\sigma)\) such that \(\forall_{u :
   \varphi(\sigma)}(A(\sigma, u), f(\sigma, u)) = (D(c, \varphi,
   f)(\sigma), g(c, \varphi, f)(\sigma))\). For the morphism part of
the functor \(D(c, \varphi, f)\), let \(\tau : \myicat{C}(c'',
   c')\) be another morphism. Then we define \(\tau^{*} : D(c,
   \varphi, f)(\sigma) \to D(c, \varphi, f)(\sigma\tau)\) to be the
composition
\[
  \begin{tikzcd}[column sep=11ex]
    D(c, \varphi, f)(\sigma) \arrow[r,"{g(c, \varphi, f)(\sigma)}","\cong"'] &
    B(\sigma) \arrow[r,"\tau^{*}"] &
    B(\sigma\tau) \arrow[r,"{g(c, \varphi, f)(\sigma\tau)^{-1}}","\cong"'] &
    D(c, \varphi, f)(\sigma\tau).
  \end{tikzcd}
\]
By definition \(g(c, \varphi, f)\) becomes a natural isomorphism and
\((D(c, \varphi, f)\pi_{\varphi}, g(c, \varphi, f)\pi_{\varphi}) = (A,
f)\). It is easy to see the naturality of
\((c, \varphi, f) \mapsto (D(c, \varphi, f), g(c, \varphi, f))\).
\end{proof}

\subsection{Intervals}
\label{sec:org5989482}
Suppose a category \(\myicat{C}\) in \(\mycat{S}\) has finite
products. A \emph{path connection algebra} in \(\myicat{C}\) consists of
an object \(\myI : \myicat{C}\), morphisms \(\delta_{0},
   \delta_{1} : \myicat{C}(1, \myI)\) called \emph{end-points} and
morphisms \(\mu_{0}, \mu_{1} : \myicat{C}(\myI \times \myI, \myI)\)
called \emph{connections} satisfying \(\mu_{e}(\delta_{e} \times \myI) =
   \mu_{e}(\myI \times \delta_{e}) = \delta_{e}\) and
\(\mu_{e}(\delta_{\bar{e}} \times \myI) = \mu_{e}(\myI \times
   \delta_{\bar{e}}) = \myid\) for \(e \in \{0, 1\}\).

For a path connection algebra \(\myI\) in \(\myicat{C}\), we have a
representable presheaf \(\myyoneda \myI\) on \(\myicat{C}\). Since
the Yoneda embedding is fully faithful and preserves finite
products, \(\myyoneda \myI\) has end-points and connections
satisfying \cref{org541f57d,org6fd8c58}. The
interval \(\myyoneda \myI\) satisfies \cref{org09c88da} if
and only if \(\forall_{c : \myicat{C}}\delta_{0}!_{c} \neq
   \delta_{1}!_{c}\) holds, where \(!_{c} : \myicat{C}(c, 1)\) is the
unique morphism into the terminal object.

\begin{proposition}
Let \(\myCof\) be a propositional universe in \(\mycat{S}\) and
suppose that, for every pair of objects \(c, c' : \myicat{C}\), the
equality predicate on \(\myicat{C}(c, c')\) belongs to
\(\myCof\). Then, for every object \(c : \myicat{C}\), the equality
predicate on \(\myyoneda c\) belongs to \([\myicat{C}^{\myop},
   \myCof]\). In particular, \(\myyoneda \myI\) and
\([\myicat{C}^{\myop}, \myCof]\) in \(\myPrshv(\myicat{C})\)
satisfy \cref{orge92c1ca,orgc87da6c}.
\end{proposition}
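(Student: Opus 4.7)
The plan is to unwind the Hofmann-Streicher construction and exhibit the required term explicitly; the biconditional with extensional equality then collapses to a pointwise computation. To be precise, the claim ``the equality predicate on $\myyoneda c$ belongs to $[\myicat{C}^{\myop}, \myCof]$'' means that there is a term $x, y : \myyoneda c \vdash E_c(x, y) : [\myicat{C}^{\myop}, \myCof]$ in $\myPrshv(\myicat{C})$ with $\myel(E_c(x, y)) \myfromto (x = y)$. Using the defining formula for the Hofmann-Streicher universe, such a term corresponds to a functor $\mycatEl(\myyoneda c \times \myyoneda c)^{\myop} \to \myCof$, so the task reduces to producing such a functor.

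Next, I would unfold the category of elements: objects of $\mycatEl(\myyoneda c \times \myyoneda c)$ are triples $(c', \sigma, \tau)$ with $\sigma, \tau : \myicat{C}(c', c)$, and a morphism $(c'', \sigma', \tau') \to (c', \sigma, \tau)$ is $\rho : \myicat{C}(c'', c')$ satisfying $\sigma\rho = \sigma'$ and $\tau\rho = \tau'$. I would then define the functor on objects by $(c', \sigma, \tau) \mapsto (\sigma = \tau)$, which lies in $\myCof$ by the standing hypothesis on $\myicat{C}$, and on morphisms by substitution: if $\sigma = \tau$ then $\sigma\rho = \tau\rho$. Functoriality is automatic because $\myCof$ is propositional.

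Finally, I would check the biconditional and deduce the particular case. By the definitional formula $\myel_{[\myicat{C}^{\myop}, \myCof]}(c', F) = \myel_{\myCof}(F(\myid_{c'}))$, the presheaf $\myel(E_c(x, y))$ evaluates at $(c', \sigma, \tau)$ to $(\sigma = \tau)$; and since extensional identity types in $\myPrshv(\myicat{C})$ are computed pointwise, so does $x = y$. The two presheaves therefore agree, which yields the required biconditional. Axioms~4 and~5 then follow by instantiating $E_{\myI}$ at $y := \delta_0$ and $y := \delta_1$ respectively, using that the end-points are global sections of $\myyoneda \myI$. The whole argument is essentially definitional bookkeeping between terms in the Hofmann-Streicher universe and functors on the category of elements; no serious obstacle arises, since the pointwise description of identity types in the presheaf model matches the pointwise description of $\myel_{[\myicat{C}^{\myop}, \myCof]}$.
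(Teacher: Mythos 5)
Your proof is correct and follows the same route as the paper, whose entire proof is the one-line observation ``because equality on a presheaf is pointwise''; your write-up simply unfolds that remark into the explicit functor $(c',\sigma,\tau)\mapsto(\sigma=\tau)$ on the category of elements and checks the bookkeeping. No discrepancies.
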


\begin{proof}
Because equality on a presheaf is pointwise.
\end{proof}

\begin{proposition}
For a functor \(f : \myicat{C} \to \myicat{D}\) between categories
in \(\mycat{S}\), the precomposition functor \(f^{*} :
   \myPrshv(\myicat{D}) \to \myPrshv(\myicat{C})\) has a dependent
right adjoint \(f_{*}\).
\end{proposition}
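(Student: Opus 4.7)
The plan is to construct $f_{*}$ as a dependent analog of the right Kan extension along $f$. Let $\Gamma$ be a presheaf on $\myicat{D}$ and $A$ a presheaf on $\mycatEl(f^{*}\Gamma)$. The main step is to define, for each $(d, \delta) \in \mycatEl(\Gamma)$,
\[
 (f_{*}A)(d, \delta) := \Big\{\alpha : \prod_{c : \myicat{C}}\prod_{\sigma : \myicat{D}(fc, d)} A(c, \delta\sigma) \,\Big|\, \forall_{c', c, \tau, \sigma}\ \alpha(c, \sigma)\tau = \alpha(c', \sigma f\tau)\Big\},
\]
with the dinaturality condition quantified over $\tau : \myicat{C}(c', c)$ and $\sigma : \myicat{D}(fc, d)$; both sides then lie in $A(c', \delta\sigma f\tau)$. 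The $\mycatEl(\Gamma)$-action is by precomposition: for $\rho : (d', \delta') \to (d, \delta)$, set $(\alpha\rho)(c, \sigma) := \alpha(c, \rho\sigma)$, which lands in the correct fibre because $\delta\rho = \delta'$.

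Next I would exhibit the adjunction bijection between sections $a$ of $A$ over $\mycatEl(f^{*}\Gamma)$ and sections $\beta$ of $f_{*}A$ over $\mycatEl(\Gamma)$ by
\[
 \widetilde{a}(d, \delta)(c, \sigma) := a(c, \delta\sigma), \qquad \widehat{\beta}(c, \gamma) := \beta(fc, \gamma)(c, \myid_{fc}).
\]
Dinaturality of $\widetilde{a}(d,\delta)$ and naturality of $\widetilde{a}$ both follow from naturality of $a$ and associativity of the $\Gamma$-action; naturality of $\widehat{\beta}$ uses naturality of $\beta$ together with the definition of the $\mycatEl(\Gamma)$-action on $f_{*}A$.

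The step I expect to be the main obstacle is showing that the two assignments are mutually inverse. The identity $\widehat{\widetilde{a}} = a$ is immediate from $\gamma\,\myid_{fc} = \gamma$. For $\widetilde{\widehat{\beta}} = \beta$ one must check $\beta(fc, \delta\sigma)(c, \myid_{fc}) = \beta(d, \delta)(c, \sigma)$; the point is that $\sigma$ is a morphism $(fc, \delta\sigma) \to (d, \delta)$ in $\mycatEl(\Gamma)$, so naturality of $\beta$ gives $\beta(fc, \delta\sigma) = \beta(d,\delta)\sigma$, and applying the precomposition formula yields $(\beta(d,\delta)\sigma)(c, \myid_{fc}) = \beta(d,\delta)(c, \sigma\,\myid_{fc}) = \beta(d,\delta)(c, \sigma)$. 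Stability under reindexing along $\sigma_{0} : \Delta \to \Gamma$ in $\myPrshv(\myicat{D})$ is then routine: the pointwise formulas for $(f_{*}A)\sigma_{0}$ and $f_{*}(A\cdot f^{*}\sigma_{0})$ agree by naturality of $\sigma_{0}$, and the bijection commutes with reindexing because it is defined componentwise.
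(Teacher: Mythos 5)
Your proof is correct and takes essentially the same route as the paper: your set of compatible families \(\alpha(c,\sigma)\) with \(\alpha(c,\sigma)\tau = \alpha(c',\sigma f\tau)\) is precisely the explicit description of the paper's \(\lim_{(c,\sigma) : (f \downarrow d)}A(c,\gamma\sigma)\) over the comma category, which is all the paper records. The adjunction bijection, its invertibility via naturality of \(\beta\) along \(\sigma : (fc,\delta\sigma) \to (d,\delta)\), and stability under reindexing are left implicit in the paper, and you have verified them correctly.
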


\begin{proof}
For a context \(\Gamma\) in \(\myPrshv(\myicat{D})\) and a type \(f^{*}\Gamma
   \vdash A\) in \(\myPrshv(\myicat{C})\), the type \(\Gamma \vdash
   f_{*}A\) is given by the presheaf \((d, \gamma) : \mycatEl(\Gamma)
   \vdash \lim_{(c, \sigma) : (f \downarrow d)}A(c, \gamma\sigma)\).
\end{proof}

\begin{proposition}
\label{org88bbd74}
Suppose that a category \(\myicat{C}\) in \(\mycat{S}\) has finite
products. For an object \(c : \myicat{C}\), the exponential functor
\((\myyoneda c \to -) : \myPrshv(\myicat{C}) \to
   \myPrshv(\myicat{C})\) is isomorphic to \((- \times c)^{*}\).
\end{proposition}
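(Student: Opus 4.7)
The plan is to construct the isomorphism pointwise via a Yoneda-style computation, working in the internal language of $\mycat{S}$. For a presheaf $X$ on $\myicat{C}$ and an object $d : \myicat{C}$, the explicit description of presheaf exponentials (obtained by specializing the dependent-product construction recalled in \cref{sec:org3425270} to a non-dependent type) gives
\[ (\myyoneda c \to X)(d) = \{f : \prod_{c' : \myicat{C}} \prod_{\sigma : \myicat{C}(c', d)} \prod_{\tau : \myicat{C}(c', c)} X(c') \mid \forall_{c'', \rho}\ (fc'\sigma\tau)\rho = fc''(\sigma\rho)(\tau\rho)\}. \]
Since $\myicat{C}$ has finite products, a pair $(\sigma, \tau) \in \myicat{C}(c', d) \times \myicat{C}(c', c)$ corresponds bijectively, by the universal property, to a single morphism $\langle \sigma, \tau \rangle : c' \to d \times c$, and this bijection is compatible with precomposition by any $\rho : c'' \to c'$.

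Using this bijection, the right-hand side above is naturally isomorphic to the set of families $\{g : \prod_{c'} \prod_{\pi : \myicat{C}(c', d \times c)} X(c') \mid (gc'\pi)\rho = gc''(\pi\rho)\}$. Such a family is exactly a natural transformation $\myyoneda(d \times c) \to X$, so by the Yoneda lemma (internal to $\mycat{S}$) it corresponds to a unique element of $X(d \times c) = ((- \times c)^{*}X)(d)$. Concretely, the isomorphism sends $f \mapsto f_{d \times c}(\pi_{1}, \pi_{2})$ in one direction and $x \mapsto \lambda c'\sigma\tau.\, x \cdot \langle \sigma, \tau \rangle$ in the other.

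It remains to check that this pointwise bijection is natural in $d$, i.e.\ commutes with the $\myicat{C}$-actions on both sides, and natural in $X$, i.e.\ compatible with morphisms of presheaves. Both follow by unfolding definitions: naturality in $d$ along $\rho : e \to d$ translates on one side to precomposition of $\sigma$ with $\rho$, and on the other side to the action of $\rho \times c : e \times c \to d \times c$ on $X$, and these agree because $\langle \sigma, \tau \rangle \circ \rho = \langle \sigma\rho, \tau\rho \rangle = (\rho \times c) \circ \langle \sigma, \tau' \rangle$ for the appropriate pairings.

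There is no real obstacle here beyond routine bookkeeping; the substantive input is only that $\myicat{C}$ has finite products (so that the pairing bijection is available) and the internal Yoneda lemma. The result amounts to the standard fact that representables are exponentiable in a presheaf category via precomposition with $(- \times c)$, now formulated internally to $\mycat{S}$.
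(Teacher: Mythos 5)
Your proof is correct and is essentially the paper's argument: the paper's one-line chain \((\myyoneda c \to A)(d) \cong \myPrshv(\myicat{C})(\myyoneda d \times \myyoneda c, A) \cong \myPrshv(\myicat{C})(\myyoneda(d \times c), A) \cong A(d \times c)\) is exactly your computation, with your explicit family description being the unfolding of the first hom-set and your pairing step being the fact that the Yoneda embedding preserves finite products. You have merely made the naturality checks explicit, which the paper leaves implicit.
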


\begin{proof}
\((\myyoneda c \to A)(c') \cong \myPrshv(\myicat{C})(\myyoneda c'
   \times \myyoneda c, A) \cong \myPrshv(\myicat{C})(\myyoneda(c'
   \times c), A) \cong A(c' \times c)\).
\end{proof}

Hence the exponential functor \((\myyoneda \myI \to -)\) has a
dependent right adjoint. \Cref{org88bbd74} also
implies \cref{org4e8c5eb} for the propositional universe
\([\myicat{C}^{\myop}, \myCof]\). Explicitly, \(\forall_{\myyoneda
   \myI} : (- \times \myyoneda \myI)^{*}[\myicat{C}^{\myop}, \myCof]
   \to [\myicat{C}^{\myop}, \myCof]\) is a natural transformation that
carries a functor \(\varphi : (\myicat{C}/c \times \myI)^{\myop}
   \to \myCof\) to \(\lambda \sigma.\varphi(\sigma \times \myI) :
   (\myicat{C}/c)^{\myop} \to \myCof\).

In summary, we have:

\begin{theorem}
\label{org611c4ef}
Suppose:
\begin{itemize}
\item \(\mycat{S}\) is a model of dependent type theory with dependent
product types, dependent sum types, extensional identity types,
unit type, disjoint finite coproducts and propositional
truncation;
\item \(\myCof\) is a propositional universe and \(\myUniverse\) is an
impredicative universe satisfying
\cref{org629a444,org8e1319f,org46f619f,org0fe33f4};
\item \(\myicat{C}\) is a category in \(\mycat{S}\) with finite
products and the equality on \(\myicat{C}(c, c')\) belongs to
\(\myCof\) for every pair of objects \(c, c' : \myicat{C}\);
\item \(\myI\) is a path connection algebra in \(\myicat{C}\);
\item \(\myyoneda \myI\) satisfies \cref{org09c88da}.
\end{itemize}
Then the presheaf model \(\myPrshv(\myicat{C})\) together with
propositional universe \([\myicat{C}^{\myop}, \myCof]\), impredicative
universe \([\myicat{C}^{\myop}, \myUniverse]\) and interval
\(\myyoneda \myI\) satisfies all the axioms in
\cref{orgda6ce3b}.
\end{theorem}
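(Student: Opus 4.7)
The plan is to verify each axiom of \cref{orgda6ce3b} by citing the corresponding preparatory result from this section; the statement is essentially a compilation theorem rather than a new construction.

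First I would record that $\myPrshv(\myicat{C})$ inherits the background type theory (dependent products and sums, extensional identity types, unit type, disjoint finite coproducts, propositional truncation) pointwise from $\mycat{S}$, as laid out at the start of the section, and that $[\myicat{C}^{\myop}, \myCof]$ and $[\myicat{C}^{\myop}, \myUniverse]$ are a propositional and an impredicative universe respectively by the Hofmann-Streicher lifting discussion. Then I would walk through the axioms in the order they appear.

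\Cref{org09c88da} is exactly the last hypothesis of the theorem, combined with the characterization given immediately after the definition of path connection algebra. \Cref{org541f57d,org6fd8c58} follow because the Yoneda embedding is fully faithful and preserves finite products, so the path-connection-algebra equations for $\myI$ in $\myicat{C}$ transport verbatim to equalities for $\myyoneda\myI$ in $\myPrshv(\myicat{C})$. \Cref{orge92c1ca,orgc87da6c} come from instantiating the proposition that the equality predicate on $\myyoneda c$ lies in $[\myicat{C}^{\myop}, \myCof]$, applied at $c = \myI$ and the two end-points. \Cref{org629a444,org8e1319f,org46f619f,org0fe33f4} are precisely the content of \cref{org824da5b}. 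Finally, \cref{org4e8c5eb} and the existence of a dependent right adjoint to the exponential functor $(\myyoneda\myI \to -)$ both follow from \cref{org88bbd74}, which identifies that exponential with the reindexing $(- \times \myI)^{*}$; the quantifier $\forall_{\myyoneda\myI}$ on cofibrations is then the natural transformation described explicitly in the paragraph just after that proposition.

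The substantive content has already been absorbed into the two lifting propositions: \cref{org824da5b} performed the delicate part of transferring the isomorphism extension axiom and the cofibration closure axioms through the Hofmann-Streicher lifting, and \cref{org88bbd74} identified the exponential along $\myyoneda\myI$ as a reindexing functor, simultaneously yielding a dependent right adjoint and the universal-quantifier operation on cofibrations. For this reason I do not expect any genuine obstacle in the proof; the only real task is careful bookkeeping, matching each of the ten Orton-Pitts axioms with the predecessor that supplies it, which makes the argument essentially mechanical.
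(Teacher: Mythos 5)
Your proposal is correct and matches the paper exactly: the theorem is prefaced by ``In summary, we have,'' and its intended proof is precisely the compilation of the preceding results in the section that you carry out axiom by axiom. The only cosmetic point is that the dependent right adjoint to \((\myyoneda\myI \to -)\) needs \cref{org88bbd74} \emph{together with} the separate proposition that precomposition functors \(f^{*}\) admit dependent right adjoints \(f_{*}\), but you clearly have both ingredients in view.
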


\subsection{Decidable Subobject Classifier}
\label{sec:org1d99ffe}

An example of the propositional universe \(\myCof\) in
\cref{org611c4ef} is the decidable subobject classifier
\(\myBoolean\) which always satisfies
\cref{org629a444,org8e1319f,org46f619f}.

\begin{proposition}
In a model of dependent type theory with dependent product types,
dependent sum types, extensional identity types, unit type,
disjoint finite coproducts and propositional truncation, any
universe \(\myUniverse\) satisfies \cref{org0fe33f4} with
\(\myCof = \myBoolean\).
\end{proposition}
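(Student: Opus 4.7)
The plan is to use the fact that $\myBoolean = \myUnittype + \myUnittype$ is a disjoint coproduct to perform case analysis on the boolean value $\varphi : \myBoolean$. For each case I will exhibit a canonical choice of $(\bar{A}, \bar{f})$ directly, and then glue the two cases together via the coproduct eliminator, which automatically gives a term stable under reindexing.

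Concretely, in the context $\varphi : \myBoolean, A : \varphi \to \myUniverse, B : \myUniverse, f : \prod_{u : \varphi} Au \cong B$, I would define $\mylift(\varphi, f)$ by case analysis on $\varphi$. In the case $\varphi = 1$, the proposition ``$\varphi = 1$'' has a canonical inhabitant $u_{0}$, so set $\bar{A} := A u_{0}$ and $\bar{f} := f u_{0}$; the side condition $\forall_{u : \varphi}(Au, fu) = (\bar{A}, \bar{f})$ holds because $\varphi$ is a proposition, hence any $u : \varphi$ is equal to $u_{0}$ and so $(Au, fu) = (Au_{0}, fu_{0})$ by congruence. In the case $\varphi = 0$, the proposition ``$\varphi = 1$'' is empty, so set $\bar{A} := B$ and $\bar{f} := \myid_{B}$; the side condition holds vacuously.

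Combining the two cases via the elimination rule for the disjoint coproduct $\myUnittype + \myUnittype$ produces the required term $\mylift(\varphi, f)$. Stability under base change follows automatically from the fact that the whole construction is expressed by type-theoretic operators (coproduct elimination together with $A u_{0}$, $f u_{0}$, $B$ and $\myid_{B}$), all of which are stable by hypothesis on $\mycat{E}$.

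There is essentially no obstacle here: the decidability built into $\myBoolean$ reduces the isomorphism extension axiom to a trivial choice in each of the two possible worlds. The only point that deserves attention is that on the true branch one must rely on $\varphi$ being a genuine proposition (so that the value of $(Au, fu)$ does not depend on the chosen $u$), which is immediate because $\myBoolean$ is, by construction, a decidable subobject classifier.
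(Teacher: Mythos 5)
Your proposal is correct and is essentially the paper's own proof: the paper likewise defines $\mylift$ by case analysis on $\varphi : \myBoolean$, setting $\mylift(0,f) := (B, \myid)$ and $\mylift(1,f) := (A{*}, f{*})$ with ${*}$ the unique inhabitant of the singleton. Your additional remarks on why the side condition $\forall_{u:\varphi}(Au,fu)=(\bar{A},\bar{f})$ holds in each branch, and on stability under reindexing, just make explicit what the paper leaves implicit.
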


\begin{proof}
Let \(\varphi : \myBoolean, A : \varphi \to \myUniverse, B :
   \myUniverse, f : \prod_{u : \varphi}Au \cong B\). We define
\(\mylift(\varphi, f)\) by case analysis on \(\varphi :
   \myBoolean\) as \(\mylift(0, f) := (B, \myid)\) and \(\mylift(1, f)
   := (A{*}, f{*})\) where \({*}\) is the unique element of a
singleton type.
\end{proof}

\subsection{Categories of Cubes}
\label{sec:orgb20ba66}
We present examples of internal categories \(\myicat{C}\) with a path
connection algebra \(\myI\) satisfying the hypotheses of
\cref{org611c4ef} with \(\myCof = \myBoolean\). Obvious choices of
\(\myicat{C}\) are the category of free de Morgan algebras
\cite{cohen2016cubical} and various syntactic categories of the
language \(\{0, 1, \sqcap, \sqcup\}\) \cite{buchholtz2017varieties},
but some inductive types and quotient types are required to construct
these categories in dependent type theory. Although the motivating
example of \(\mycat{S}\), the category of assemblies described in
\cref{sec:orga3e7717}, has inductive types and finite colimits,
quotients are not well-behaved in general and we need to be careful in
using quotients. Instead, we give examples definable only using
natural numbers.

Suppose \(\mycat{S}\) is a model of dependent type theory with
dependent product types, dependent sum types, extensional identity
types, unit type, disjoint finite coproducts, propositional
truncation and natural numbers. We define a type of finite types
\(n : \myN \vdash \myFin_{n}\) to be \(\myFin_{n} = \{k : \myN \mid
   k < n\}\). We define a category \(\myicat{B}\) as follows. Its
object of objects is \(\myN\). The morphisms \(m \to n\) are
functions \((\myFin_{m} \to \myBoolean) \to (\myFin_{n} \to
   \myBoolean)\). In the category \(\myicat{B}\), the terminal object is
\(0 : \myN\) and the product of \(m\) and \(n\) is \(m + n\). One can
show, by induction, that every \(\myicat{B}(m, n)\) has decidable
equality. \(\myicat{B}\) has a path connection algebra \(1 : \myN\)
together with end-points \(0, 1 : (\myFin_{0} \to \myBoolean) \to
   (\myFin_{1} \to \myBoolean)\) and connections \(\mymin, \mymax :
   (\myFin_{1} \to \myBoolean) \times (\myFin_{1} \to \myBoolean) \to
   (\myFin_{1} \to \myBoolean)\). One can show that the category
\(\myicat{B}\) satisfies the hypotheses of
\cref{org611c4ef}. Moreover, any subcategory of \(\myicat{B}\)
that has the same finite products and contains the path connection
algebra \(1\) satisfies the same condition. An example is the wide
subcategory \(\myicat{B}_{\myord}\) of \(\myicat{B}\) where the
morphisms are order-preserving functions \((\myFin_{m} \to
   \myBoolean) \to (\myFin_{n} \to \myBoolean)\).

\subsection{Constant and Codiscrete Presheaves}
\label{sec:orgf2b0669}

We show some properties of constant and codiscrete presheaves which
will be used in \cref{sec:orga3e7717}.
Let \(\mycat{S}\) be a model of dependent type theory satisfying
the hypotheses of \cref{org611c4ef}. For an object \(A \in
   \mycat{S}\), we define the \emph{constant presheaf} \(\Delta A\) to be
\(\Delta A(c) := A\) with the trivial \(\myicat{C}\)-action.

\begin{proposition}
\label{orgf81c59c}
Every constant presheaf \(\Delta A\) is discrete.
\end{proposition}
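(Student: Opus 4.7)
The plan is to unfold the internal statement of discreteness in the presheaf model and reduce it to a triviality about the constant functor. Recall from the definition of discrete that we must show the closed proposition $\forall_{f : \myyoneda \myI \to \Delta A}\forall_{i : \myyoneda \myI} fi = f0$ holds in $\myPrshv(\myicat{C})$, where the interval is $\myyoneda \myI$ and the type $A$ is replaced by $\Delta A$. By the presheaf semantics of $\forall$ and the equality type, this amounts to showing that, for every object $c : \myicat{C}$, every $f \in (\myyoneda \myI \to \Delta A)(c)$ and every $i \in \myyoneda \myI(c)$, we have $fi = f0$ in $\Delta A(c) = A$.

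First I would identify $(\myyoneda \myI \to \Delta A)(c)$ explicitly. By \cref{org88bbd74} (applied to the object $\myI$), the exponential $\myyoneda \myI \to \Delta A$ at $c$ is $\Delta A(c \times \myI)$, which equals $A$ since $\Delta A$ is constant. So $f$ corresponds to some element $a \in A$. Tracing through the isomorphism, the evaluation $fi$ of $f$ at an element $i : \myicat{C}(c, \myI)$ (i.e.\ an element of $\myyoneda \myI$ at stage $c$) is given by restricting $a$ along the pairing $\langle \myid_c, i\rangle : c \to c \times \myI$; but since the $\myicat{C}$-action on $\Delta A$ is trivial, this restriction is again $a$. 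Identically, $f0$ is obtained by restricting $a$ along $\langle \myid_c, \delta_0 !_c \rangle$, yielding $a$. Hence $fi = a = f0$.

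The same calculation works at any future stage $c' \to c$, so the universally quantified statement holds in the presheaf model. No step here looks difficult; the only potential pitfall is keeping the Kripke-style semantics of $\forall$ straight, but since $\Delta A$ has trivial action the restrictions all collapse to identity on the underlying set $A$, which is exactly what makes $\Delta A$ discrete.
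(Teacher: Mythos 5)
Your proof is correct and follows the same route as the paper: the paper's proof is exactly the one-line computation $(\myyoneda \myI \to \Delta A)(c) \cong \Delta A(c \times \myI) = A$ via \cref{org88bbd74}, with the conclusion that such an $f$ is constant left implicit. You have simply spelled out the remaining details (tracing evaluation through the isomorphism and using the triviality of the $\myicat{C}$-action), which is a faithful elaboration rather than a different argument.
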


\begin{proof}
For every \(c : \myicat{C}\), we have \((\myyoneda \myI \to \Delta A)(c)
   \cong \Delta A(c \times \myI) = A\) by
\cref{org88bbd74}.
\end{proof}

For a type \(\Gamma \vdash A\) in \(\mycat{S}\), we define the
\emph{codiscrete presheaf} \(\Delta \Gamma \vdash \nabla A\) to be
\(\nabla A(c, \gamma) := \myicat{C}(1, c) \to A(\gamma)\) with
composition as the \(\myicat{C}\)-action.

\begin{proposition}
\label{org242a9f3}
Suppose that \(\myCof = \myBoolean\). Then for every type \(\Gamma
   \vdash A\) in \(\mycat{S}\), the type \(\Delta \Gamma \vdash \nabla
   A\) has a fibration structure.
\end{proposition}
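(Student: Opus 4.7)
The plan is to exploit the fact that, although $\nabla A$ itself is typically not discrete, the base context $\Delta\Gamma$ \emph{is} discrete by \cref{orgf81c59c}, so any path $p : \myI \to \Delta\Gamma$ is pointwise constant. The family $i : \myI \vdash \nabla A(pi)$ is therefore extensionally equal to the constant family $\nabla A(p0)$, and the composition structure only needs to propagate a datum across a fibrewise-constant family. Combined with the decidability hypothesis $\myCof = \myBoolean$, this lets us define $\myComp$ by case analysis on the cofibration.

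Concretely, fix $p : \myI \to \Delta\Gamma$. By \cref{orgf81c59c}, $\forall_{i : \myI}\, pi = p0$, so using extensional identity types the family $i \mapsto \nabla A(pi)$ coincides with the constant family $B := \nabla A(p0)$. It then suffices to exhibit, for every $e : \myBoolean$, $\varphi : \myBoolean$, $f : \varphi \to \myI \to B$ and $a : B$ with $\forall_{u : \varphi}\, fue = a$, an element $a' : B$ with $\forall_{u : \varphi}\, fu\bar{e} = a'$. Since $\myCof = \myBoolean$, the value of $\varphi$ is decidable. If $\varphi$ holds, pick the unique witness $u : \varphi$ and set $a' := fu\bar{e}$; the required equation is immediate since $\varphi$ is a proposition. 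If $\varphi$ does not hold, set $a' := a$; the equation is vacuous.

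The construction is evidently stable under reindexing along $\sigma : \Delta \to \Gamma$, since both the discreteness lemma and the boolean case analysis are uniform in the path $p$, so the resulting term lies in $\myFib(\nabla A)$.

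The only point that requires care — and what I expect to be the main obstacle when making this fully rigorous — is the passage from ``$pi = p0$'' to ``$\nabla A(pi) = B$ on the nose'', which relies on extensionality of identity types and on reading the whole argument inside the internal language of $\myPrshv(\myicat{C})$ where $\myCof$ really is $\myBoolean$; once that is set up, the rest is the straightforward boolean case split already used in the proof of \cref{org96b19e9}.
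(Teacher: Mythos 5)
There is a genuine gap. Your opening reduction — using discreteness of \(\Delta\Gamma\) (\cref{orgf81c59c}) to replace the family \(i \mapsto \nabla A(pi)\) by a constant type \(B\) — matches the paper's first step and is fine. The problem is the next step, the case split ``if \(\varphi\) holds, set \(a' := fu\bar e\); otherwise set \(a' := a\).'' The hypothesis \(\myCof = \myBoolean\) is about the base model \(\mycat{S}\); in the presheaf model the universe of cofibrations is the Hofmann--Streicher lifting \([\myicat{C}^{\myop}, \myBoolean]\), whose elements at stage \(c\) are functors \((\myicat{C}/c)^{\myop} \to \myBoolean\), i.e.\ decidable \emph{sieves}, not internally decidable propositions. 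Such a \(\varphi\) can hold at some stages and fail at others (e.g.\ \(i = 0\) on \(\myyoneda\myI\), which must be a cofibration by \cref{orge92c1ca}), so \(\varphi \lor \neg\varphi\) is not available in the internal language and your dichotomy is not exhaustive. A quick sanity check: your argument never uses the definition of \(\nabla A\), so if it worked it would show that \emph{every} closed type is fibrant, which would trivialize the whole fibration apparatus.

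The missing idea is codiscreteness. Since \(\nabla A(c) = \myicat{C}(1, c) \to A\), the output of the composition can be built one global point \(\sigma : \myicat{C}(1, c)\) at a time, and after restricting along \(\sigma\) the cofibration becomes an honest element \(\varphi(\sigma) : \myBoolean\) of \(\mycat{S}\), where the case split \emph{is} legitimate. That is exactly what the paper does: it sets \(\alpha(e,\varphi,f,a)(\sigma) := f(\sigma, u, \bar e)(\myid_{1})\) when a witness \(u : \varphi(\sigma)\) exists and \(a(\sigma)\) otherwise, and checks the boundary condition pointwise. So your proof needs to be repaired by unfolding the codiscrete presheaf and performing the boolean case analysis at global points rather than internally on \(\varphi\) as a whole.
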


\begin{proof}
Since \(\Delta \Gamma\) is discrete, it suffices to show that
\(\nabla A(\gamma)\) has a fibration structure for every \(\gamma :
   \Gamma\). Thus we may assume that
\(\Gamma\) is the empty context. We construct a term
\[\alpha : \prod_{e : \myBoolean}\prod_{\varphi :
   [\myicat{C}^{\myop}, \myBoolean]}\prod_{f : \varphi \to \myI \to
   \nabla A}\prod_{a : \nabla A}(\forall_{u : \varphi}fue = a) \to
   \{\bar{a} : \nabla A \mid \forall_{u : \varphi}fu\bar{e} =
   \bar{a}\}\]
in \(\myPrshv(\myicat{C})\). It corresponds to a natural
transformation that takes an object \(c : \myicat{C}\), an element
\(e : \myBoolean\), a functor \(\varphi : (\myicat{C}/c)^{\myop}
   \to \myBoolean\), a natural transformation \(f :
   \int_{c' \in \myicat{C}}(\sum_{\sigma : \myicat{C}(c', c)}\varphi(\sigma)) \times
   \myicat{C}(c', \myI) \to \nabla A(c')\) and an element \(a : \nabla
   A(c)\) such that \(\forall_{c' : \myicat{C}}\forall_{\sigma :
   \myicat{C}(c', c)}\forall_{u : \varphi(\sigma)}f(\sigma, u, e) =
   a\sigma\) and returns an element \(\alpha(e, \varphi, f, a) :
   \nabla A(c)\) such that \(\forall_{c' :
   \myicat{C}}\forall_{\sigma : \myicat{C}(c', c)}\forall_{u :
   \varphi(\sigma)}f(\sigma, u, \bar{e}) = \alpha(e, \varphi, f,
   a)\sigma\). We define \(\alpha(e, \varphi, f, a) : \myicat{C}(1, c)
   \to A\) as
\[
  \alpha(e, \varphi, f, a)(\sigma) :=
  \left\{
    \begin{array}{ll}
      f(\sigma, u, \bar{e})(\myid_{1}) & \text{if \(u : \varphi(\sigma)\) is found} \\
      a(\sigma) & \text{otherwise}
    \end{array}
  \right.
\]
for \(\sigma : \myicat{C}(1, c)\). Then by definition
\(\forall_{c' : \myicat{C}}\forall_{\sigma : \myicat{C}(c',
   c)}\forall_{u : \varphi(\sigma)}f(\sigma, u, \bar{e}) = \alpha(e,
   \varphi, f, a)\sigma\).
\end{proof}

\begin{proposition}
\label{org7b84af3}
Suppose that \(\myicat{C}(1, \myI)\) only contains \(0\) and \(1\),
namely \(\forall_{\sigma : \myicat{C}(1, \myI)}\sigma = 0 \lor
   \sigma = 1\). Then for every type \(\Gamma \vdash A\) in
\(\mycat{S}\), there exists a term
\[\Delta \Gamma \vdash p : \prod_{a_{0}, a_{1} : \nabla
   A}\myPath(\nabla A, a_{0}, a_{1})\]
in \(\myPrshv(\myicat{C})\).
\end{proposition}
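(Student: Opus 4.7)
The plan is to describe the required term externally as a natural transformation, using the exponential/coexponential adjunction of \cref{org88bbd74} to unfold $(\myyoneda \myI \to \nabla A)$, and then define the path by case analysis on the endpoint of the interval.

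By \cref{org88bbd74}, at each stage $c : \myicat{C}$ with $\gamma : \Gamma$, the fibre $(\myyoneda \myI \to \nabla A)(c, \gamma)$ is isomorphic to $\nabla A(c \times \myI, \gamma)$, which by definition is $\myicat{C}(1, c \times \myI) \to A(\gamma)$. The product projections give $\myicat{C}(1, c \times \myI) \cong \myicat{C}(1, c) \times \myicat{C}(1, \myI)$, and the hypothesis, together with \cref{org09c88da} (which guarantees $0 \neq 1$ in $\myicat{C}(1, \myI)$), decomposes the second factor as the disjoint two-element set $\{0, 1\}$.

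Given $a_0, a_1 \in \nabla A(c, \gamma)$, viewed as functions $\myicat{C}(1, c) \to A(\gamma)$, I would define the candidate path $p(a_0, a_1) \in \nabla A(c \times \myI, \gamma)$ by
\[
  p(a_0, a_1)(\sigma_c, e) := a_e(\sigma_c)
\]
for $\sigma_c : \myicat{C}(1, c)$ and $e \in \{0, 1\}$. This is well defined because the cases are disjoint. Precomposing with $\langle \myid_c, 0 \rangle$ and $\langle \myid_c, 1 \rangle$ recovers $a_0$ and $a_1$ respectively, so the endpoint conditions of $\myPath$ hold.

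Naturality in $c$ and in restrictions $\tau : c' \to c$ follows at once, since $p$ is defined pointwise on the global elements $\sigma_c$ and the $\myicat{C}$-action on $\nabla A$ is precomposition. No step appears genuinely hard: the whole content of the hypothesis is that global points of $\myI$ are only its endpoints, which forces any codiscrete presheaf to be trivially path-connected by the above splicing of $a_0$ and $a_1$.
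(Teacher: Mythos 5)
Your proof is correct and takes essentially the same approach as the paper: the paper likewise defines the path at stage $c$ by sending $\sigma : \myicat{C}(1,c)$ to $a_0(\sigma)$ or $a_1(\sigma)$ according to whether the induced global point of $\myI$ is $0$ or $1$, which matches your construction under the isomorphism $\myicat{C}(1, c \times \myI) \cong \myicat{C}(1,c) \times \myicat{C}(1,\myI)$.
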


\begin{proof}
We may assume that \(\Gamma\) is the empty context. The term \(p\)
corresponds to a natural transformation that takes an object \(c :
   \myicat{C}\), elements \(a_{0}, a_{1} : \nabla A(c)\) and a
morphism \(i : \myicat{C}(c, \myI)\) and returns an element
\(p(a_{0}, a_{1}, i) : \nabla A(c)\) such that \(p(a_{0}, a_{1}, 0)
   = a_{0}\) and \(p(a_{0}, a_{1}, 1) = a_{1}\). We define \(p(a_{0},
   a_{1}, i) : \myicat{C}(1, c) \to A\) as
\[
  p(a_{0}, a_{1}, i)(\sigma) :=
  \left\{
    \begin{array}{ll}
      a_{0}(\sigma) & \text{if \(i\sigma = 0\)} \\
      a_{1}(\sigma) & \text{if \(i\sigma = 1\)}
    \end{array}
  \right.
\]
for \(\sigma : \myicat{C}(1, c)\). Then by definition \(p(a_{0},
   a_{1}, 0) = a_{0}\) and \(p(a_{0}, a_{1}, 1) = a_{1}\).
\end{proof}

\section{A Failure of Propositional Resizing in Cubical Assemblies}
\label{sec:orga3e7717}
An \emph{assembly}, also called an \emph{\(\omega\)-set}, is a set \(A\) equipped
with a non-empty set \(E_{A}(a)\) of natural numbers for every \(a
  \in A\). When \(n \in E_{A}(a)\), we say \(n\) is a \emph{realizer} for
\(a\) or \(n\) \emph{realizes} \(a\). A morphism \(f : A \to B\) of
assemblies is a function \(f : A \to B\) between the underlying sets
such that there exists a partial recursive function \(e\) such that,
for any \(a \in A\) and \(n \in E_{A}(a)\), the application \(en\) is defined and
belongs to \(E_{B}(f(a))\). In that case we say \(f\) is \emph{tracked}
by \(e\) or \(e\) is a \emph{tracker} of \(f\). We shall denote by
\(\myAsm\) the category of assemblies and morphisms of
assemblies. Note that assemblies can be defined in terms of partial
combinatory algebras instead of natural numbers and partial
recursive functions \cite{vanoosten2008realizability}, and that the
rest of this section works for assemblies on any non-trivial partial
combinatory algebra.

The category \(\myAsm\) is a model of dependent type
theory. Contexts are interpreted as assemblies. Types \(\Gamma
  \vdash A\) are interpreted as families of assemblies \((A(\gamma)
  \in \myAsm)_{\gamma \in \Gamma}\) indexed over the underlying set of
\(\Gamma\). Terms \(\Gamma \vdash a : A\) are interpreted as
sections \(a \in \prod_{\gamma \in \Gamma}A(\gamma)\) such that
there exists a partial recursive function \(e\) such that, for any
\(\gamma \in \Gamma\) and \(n \in E_{\Gamma}(\gamma)\), the
application \(en\) is defined and belongs to
\(E_{A(\gamma)}(a(\gamma))\). For a type \(\Gamma \vdash A\), the
context extension \(\Gamma.A\) is interpreted as an assembly
\((\sum_{\gamma \in \Gamma}A(\gamma), (\gamma, a) \mapsto
  \{\mytuple{n, m} \mid n \in E_{\Gamma}(\gamma), m \in
  E_{A(\gamma)}(a)\})\) where \(\mytuple{n, m}\) is a fixed effective
encoding of tuples of natural numbers. It is known that \(\myAsm\)
supports dependent product types, dependent sum types, extensional
identity types, unit type, disjoint finite coproducts and natural
numbers. See, for example,
\cite{vanoosten2008realizability,longo1991constructive,jacobs1999categorical}.
For a family of assemblies \(A\) over \(\Gamma\), the propositional
truncation \(\|A\|\) is the family
\[
  \|A\|(\gamma) = \left\{
      \begin{array}{lr}
        \{{*}\} & \text{if \(A(\gamma) \neq \emptyset\)} \\
        \emptyset & \text{if \(A(\gamma) = \emptyset\)}
      \end{array}
    \right.
\]
with realizers \(E_{\|A\|(\gamma)}(*) = \bigcup_{a \in
  A(\gamma)}E_{A(\gamma)}(a)\).

It is also well-known that \(\myAsm\) has an impredicative universe
\(\myPER\). It is an assembly whose underlying set is the set of
partial equivalence relations, namely symmetric and transitive relations,
on \(\myN\) and the set of realizers of \(R\) is \(E_{\myPER}(R) =
  \{0\}\). The type \(\myPER \vdash \myel_{\myPER}\) is defined as
\(\myel_{\myPER}(R) = \myN/R\), the set of \(R\)-equivalence classes on
\(\{n \in \myN \mid R(n, n)\}\) with realizers \(E_{\myN/R}(\xi) =
  \xi\). The universe \(\myPER\) classifies \emph{modest families}. An
assembly \(A\) is said to be \emph{modest} if \(E_{A}(a)\) and
\(E_{A}(a')\) are disjoint for distinct \(a, a' \in A\). By definition \(\myN/R\) is
modest for every \(R \in \myPER\). Conversely, for a modest assembly \(A\),
one can define a partial equivalence relation \(R\) such that \(A
  \cong \myN/R\). For the impredicativity of \(\myPER\), see
\cite{hyland1988small,longo1991constructive,jacobs1999categorical}.

The category \(\myAsm\) satisfies the hypotheses of
\cref{org611c4ef} with impredicative universe \(\myPER\), propositional
universe \(\myBoolean\) and the internal category
\(\myicat{B}_{\myord}\) defined in
\cref{sec:orgb20ba66}. We will refer to the presheaf model of
cubical type theory generated by these structures as the \emph{cubical
assembly model}.

\subsection{Propositional Resizing}
\label{sec:org7cd507b}
In cubical type theory, a type \(\Gamma \vdash A\) is a
\emph{homotopy proposition} if the type \(\Gamma, a_{0}, a_{1} : A
   \vdash \myPath(A, a_{0}, a_{1})\) has an inhabitant. For a universe
\(\myUniverse\), we define the universe of homotopy
propositions as
\[\myhProp_{\myUniverse} := \sum_{A : \myUniverse}\prod_{a_{0},
   a_{1} : A}\myPath(A, a_{0}, a_{1}).\]
Following the HoTT book \cite{hottbook}, we regard \(\myhProp_{\myUniverse}\) as a
subtype of \(\myUniverse\).

The \emph{propositional resizing axiom} {\cite[Section 3.5]{hottbook}}
asserts that, for nested universes \(\myUniverse : \myUniverse'\), the
inclusion \(\myhProp_{\myUniverse} \to \myhProp_{\myUniverse'}\) is an
equivalence. When \(\myUniverse\) is an impredicative universe, we
define
\begin{align*}
  A : \myhProp_{\myUniverse'}
  &\vdash A^{*} := \prod_{X : \myhProp_{\myUniverse}}(A \to X) \to X :
    \myhProp_{\myUniverse} \\
  A : \myhProp_{\myUniverse'}
  & \vdash \eta_{A} := \lambda a.\lambda X f.fa : A \to A^{*}.
\end{align*}
If \(\eta_{A}\) is an equivalence for any
\(A : \myhProp_{\myUniverse'}\), then the inclusion
\(\myhProp_{\myUniverse} \to \myhProp_{\myUniverse'}\) is an
equivalence by univalence. Conversely, if the inclusion
\(\myhProp_{\myUniverse} \to \myhProp_{\myUniverse'}\) is an
equivalence, then one can find \(A' : \myhProp_{\myUniverse}\) and
\(e : A \simeq A'\) from \(A : \myhProp_{\myUniverse'}\). Then we have
a function \(\lambda \alpha.e^{-1}(\alpha A' e) : A^{*} \to A\), and
thus \(\eta_{A}\) is an equivalence because both \(A\) and \(A^{*}\)
are homotopy propositions. Note that the construction
\(A \mapsto (A^{*}, \eta_{A})\) works for any homotopy proposition
\(A\) and is independent of the choice of the upper universe
\(\myUniverse'\). Therefore, we can formulate the propositional
resizing axiom in cubical type theory with an impredicative universe
as follows.

\begin{axiom}
  \label{axm:prop-resizing}
  For every homotopy proposition \(\Gamma \vdash A\), the function
  \(\Gamma \vdash \eta_{A} : A \to A^{*}\) is an equivalence.
\end{axiom}

We will show that the cubical assembly model does not satisfy
\cref{axm:prop-resizing}.

\begin{remark}
  We focus on resizing propositions into the impredicative
  universe. The cubical assembly model also has predicative universes,
  assuming the existence of Grothendieck universes in the
  metatheory. It remains an open question whether the predicative
  universes in the cubical assembly model satisfy the propositional
  resizing axiom.
\end{remark}

\subsection{Uniform Objects}
\label{sec:org6dce379}

The key idea to a counterexample to propositional resizing is the
orthogonality of modest and \emph{uniform} assemblies
\cite{vanoosten2008realizability}: if \(X\) is modest and \(A\) is
uniform and well-supported, then the map \(\lambda xa.x : X \to (A \to
X)\) is an isomorphism. Since the impredicative universe \(\myPER\)
classifies modest assemblies, \(\prod_{X : \myPER}(A \to X) \to X\) is
always inhabited for a uniform, well-supported assembly \(A\). We
extend the notion of uniformity for internal presheaves in
\(\myAsm\).

An assembly \(A\) is said to be \emph{uniform} if
\(\bigcap_{a \in A}E_{A}(a)\) is non-empty. We say an internal
presheaf \(A\) on an internal category \(\myicat{C}\) is
\emph{uniform} if every \(A(c)\) is uniform. An internal presheaf
\(A\) on \(\myicat{C}\) is said to be \emph{well-supported} if the
unique morphism into the terminal presheaf is regular epi. For an
internal presheaf \(A\), the following are equivalent:
\begin{itemize}
\item \(A\) is well-supported;
\item \(\| A \|\) is the terminal presheaf;
\item there exists a partial recursive function \(e\) such that, for
any \(c \in \myicat{C}_{0}\) and \(n \in
     E_{\myicat{C}_{0}}(c)\), there exists an \(a \in A(c)\) such that
\(en\) is defined and belongs to \(E_{A}(a)\).
\end{itemize}

By definition a modest assembly cannot distinguish elements with a
common realizer, while elements of a uniform assembly have a common
realizer. Thus a modest assembly ``believes a uniform assembly has at
most one element''. Formally, the following proposition holds.

\begin{proposition}
\label{org43deb71}
Let \(\myicat{C}\) be a category in \(\myAsm\). For a uniform
internal presheaf \(A\) on \(\myicat{C}\) and an internal functor
\(X : \myicat{C}^{\myop} \to \myPER\), the precomposition function
\[i^{*} : (\| A \| \to X) \to (A \to X)\]
is an isomorphism, where \(i : A \to \|A\|\) is the constructor for
propositional truncation. In particular, if, in addition, \(A\) is
well-supported, then the function \(\lambda xa.x : X \to (A \to
   X)\) is an isomorphism.
\end{proposition}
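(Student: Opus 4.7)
The plan is to exploit the tension between uniformity of $A$ and modesty of $X$: the former guarantees a common realizer across elements, while the latter forces elements sharing a realizer to coincide. Concretely, since $\|A\|$ is a proposition and $i : A \to \|A\|$ is the unit of its propositional truncation, the map $i$ is a (regular) epimorphism in $\myPrshv(\myicat{C})$; consequently $i^{*}$ is automatically injective. The real work lies in proving surjectivity of $i^{*}$, that is, showing every internal natural transformation $f : A \to X$ factors through $i$.

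To establish surjectivity, I would argue that $f$ is pointwise constant on fibres of $i$. Working externally in $\myAsm$, a morphism $f : A \to X$ of internal presheaves amounts to, for each $c \in \myicat{C}_{0}$, a morphism of assemblies $f_{c} : A(c) \to X(c)$, together with the appropriate naturality and realizability data — in particular, there is a single partial recursive function $e$ tracking $f$ uniformly in $c$ (given a realizer of $c$). Now fix $c$ and take any two elements $a_{0}, a_{1} \in A(c)$. By uniformity of $A$, there exists $n \in E_{A(c)}(a_{0}) \cap E_{A(c)}(a_{1})$; applying the tracker of $f_{c}$ (built from $e$ and a realizer of $c$) we obtain some number $m$ realizing both $f_{c}(a_{0})$ and $f_{c}(a_{1})$ in $X(c)$. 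Since $X(c)$ is a modest assembly (being classified by $\myPER$), distinct elements have disjoint sets of realizers, so $f_{c}(a_{0}) = f_{c}(a_{1})$. Therefore $f_{c}$ factors through the quotient $A(c) \twoheadrightarrow \|A\|(c)$, and by the universal property of propositional truncation this factorisation is itself a natural transformation $\bar{f} : \|A\| \to X$ with $\bar{f} \circ i = f$.

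For the ``in particular'' clause, well-supportedness of $A$ means $\|A\|$ is the terminal presheaf, so the evaluation map $\|A\| \to X$ is equivalent to just picking an element of $X$; the composite isomorphism $X \cong (\|A\| \to X) \xrightarrow{i^{*}} (A \to X)$ then sends $x$ to the constant function $\lambda a.\,x$, which is precisely the map in the statement.

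The main obstacle I anticipate is ensuring that the pointwise reasoning in the second paragraph is genuinely uniform in $c$, i.e.\ that the factorisation $\bar{f}$ one extracts at each $c$ is itself tracked and natural. This is handled by observing that a single tracker of $f$ witnesses the equality $f(a_{0}) = f(a_{1})$ simultaneously at every stage, so the factorisation through $\|A\|$ is induced by the universal property applied fibrewise and inherits naturality for free; modesty of $X$ is what prevents this argument from requiring any choice.
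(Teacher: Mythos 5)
Your proof is correct and follows essentially the same route as the paper's: $i$ being regular epi gives injectivity of $i^{*}$, and the combination of a common realizer (uniformity of $A$) with disjointness of realizer sets (modesty of each $X(c)$) forces any map into $X$ to identify all elements of $A(c)$, yielding a factorization through $\|A\|$ tracked by the same code. The one refinement in the paper is that the surjectivity argument is run for generalized elements $x : \myyoneda c \times A \to X$ rather than only global transformations $f : A \to X$, since the claimed isomorphism is of internal homs and must therefore be verified at every stage $c$ of the exponential presheaf; your argument transfers verbatim to that setting.
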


\begin{proof}
Since \(i\) is regular epi, \(i^{*}\) is a monomorphism. Hence it
suffices to show that \(i^{*}\) is regular epi.
Let \(k_{c}\) denote a common realizer of \(A(c)\), namely \(k_{c}
   \in \bigcap_{a \in A(c)}E(a)\).  Let \(c \in \myicat{C}_{0}\) be an
object and \(x : \myyoneda c \times A \to X\) a morphism of
presheaves tracked by \(e\). We have to show that there exists a
morphism \(\hat{x} : \myyoneda c \times \| A \| \to X\) such
that \(\hat{x} \circ (\myyoneda c \times i) = x\) and that a
tracker of \(\hat{x}\) is computable from the code of \(e\). For any
\(\sigma : c' \to c\) and \(a, a' \in A(c')\), we have \(enk_{c'}
   \in E(x(\sigma, a)) \cap E(x(\sigma, a'))\) for some \(n \in
   E(\sigma)\). Since \(X(c')\) is modest, we have \(x(\sigma, a) = x(\sigma,
   a')\). Hence \(x\) induces a morphism of presheaves \(\hat{x} :
   \myyoneda c \times \| A \| \to X\) tracked by \(e\) such that
\(\hat{x} \circ (\myyoneda c \times i) = x\).
\end{proof}

\begin{theorem}
\label{orgbf23d34}
Let \(\Gamma \vdash A\) be a type in the cubical assembly
model. Suppose that \(A\) is uniform and well-supported as an
internal presheaf on \(\mycatEl(\Gamma)\) and does not have a
section. Then the function \(\Gamma \vdash \eta : A \to A^{*}\) is
not an equivalence.
\end{theorem}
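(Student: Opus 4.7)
The plan is to derive a contradiction by exhibiting a global section of $A^{*}$ over $\Gamma$, since $A$ itself is assumed to have none; an equivalence $\eta : A \to A^{*}$ would transport a section of $A^{*}$ back to a section of $A$, so producing the former suffices. The key tool is the orthogonality statement \cref{org43deb71}, which allows one to construct elements of $A \to X$ without ever naming an element of $A$, provided $X$ is modest (PER-valued) and $A$ is uniform and well-supported.

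First, I will unpack the universe. In the cubical assembly model, $\myUniverse^{F}$ is built from the Hofmann-Streicher lifting $\myUniverse = [\myicat{B}_{\myord}^{\myop}, \myPER]$, so any term $X : \myUniverse^{F}$ over a context $\Delta$ has an underlying internal presheaf on $\mycatEl(\Delta)$ taking values in $\myPER$; equivalently, the underlying presheaf factors through the full subcategory of modest assemblies. In particular, any homotopy proposition $X : \myhProp_{\myUniverse^{F}}$ over $\Gamma$ yields an internal functor $\mycatEl(\Gamma)^{\myop} \to \myPER$. Since $A$ is by hypothesis uniform and well-supported as an internal presheaf on $\mycatEl(\Gamma)$, \cref{org43deb71} applies internally in the slice: the map $\lambda xa.x : X \to (A \to X)$ is an isomorphism, with inverse denoted $\epsilon_{X} : (A \to X) \to X$.

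Second, I define a section
\[\Gamma \vdash s := \lambda X.\lambda f.\epsilon_{X}(f) : \prod_{X : \myhProp_{\myUniverse^{F}}}(A \to X) \to X = A^{*}.\]
This is well-typed because $A^{*}$ exists in $\myUniverse^{F}$ by impredicativity (\cref{orgbefc7b4}), and the construction of $\epsilon_{X}$ is natural in $X$. Hence $A^{*}$ has a section over $\Gamma$, while $A$ does not. If $\eta : A \to A^{*}$ were an equivalence, applying a (homotopy) inverse to $s$ would yield a section of $A$, contradicting the hypothesis. Therefore $\eta$ is not an equivalence.

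The main obstacle is bookkeeping: I must be careful that the $X$ appearing as an argument of $s$ really does factor through $\myPER$ in a way that makes \cref{org43deb71} directly applicable, since strictly speaking $X : \myhProp_{\myUniverse^{F}}$ is a fibration whose underlying type comes from the Hofmann-Streicher lifting and lives over an arbitrary reindexing of $\Gamma$. The verification amounts to observing that the internal $\myPER$-valued functor structure is stable under reindexing along $\mycatEl(\Gamma).X \to \Gamma$, and that the isomorphism $\epsilon_{X}$ produced by \cref{org43deb71} is natural, so $s$ is a genuine global term rather than a stagewise collection.
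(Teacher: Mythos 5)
Your proof is correct and follows the same route as the paper: the paper's own proof is a one-line application of \cref{org43deb71} to conclude that \(A^{*}\) is inhabited while \(A\) is not, which is exactly your argument. Your additional care about why each \(X : \myhProp_{\myUniverse^{F}}\) is a \(\myPER\)-valued internal functor and why the inverse \(\epsilon_{X}\) assembles into a single global term is a reasonable elaboration of what the paper leaves implicit.
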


\begin{proof}
By \cref{org43deb71}, we see that \(A^{*} =
   \prod_{X : \myhProp}(A \to X) \to X\) has an inhabitant while
\(A\) does not have an inhabitant by assumption.
\end{proof}

\begin{theorem}
\label{org5a648cb}
Let \(\Gamma \vdash A\) be a type in \(\myAsm\). Suppose that \(A\)
is uniform and well-supported but does not have a section. Then the
function \(\Delta \Gamma \vdash \eta : \nabla A \to (\nabla
   A)^{*}\) is not an equivalence.
\end{theorem}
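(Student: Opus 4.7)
The plan is to deduce this from \cref{orgbf23d34} applied to the type $\Delta\Gamma \vdash \nabla A$, whose fibration structure is provided by \cref{org242a9f3}. It therefore suffices to verify that $\nabla A$ is uniform and well-supported as an internal presheaf on $\mycatEl(\Delta\Gamma)$ and has no section.

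For uniformity, let $k_\gamma$ be a common realizer of $A(\gamma)$. Then the constant partial recursive function $\lambda n.\, k_\gamma$ tracks every $f : \myicat{C}(1,c) \to A(\gamma)$, since $k_\gamma$ realizes every value $f(\sigma) \in A(\gamma)$; hence it is a common realizer of $\nabla A(c,\gamma)$. For well-supportedness, I would combine uniformity with the witness $e_A$ of well-supportedness of $A$: given a realizer of $(c,\gamma) \in \mycatEl(\Delta\Gamma)$, one extracts the $\gamma$-component, feeds it to $e_A$ to obtain some $k_\gamma \in \bigcap_{a \in A(\gamma)} E_{A(\gamma)}(a)$, and returns a code of $\lambda n.\, k_\gamma$, which realizes the constant-valued element $\sigma \mapsto a_\gamma$ of $\nabla A(c,\gamma)$ for any $a_\gamma \in A(\gamma)$ with realizer $k_\gamma$.

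For the absence of a section, suppose toward a contradiction that $s$ is a term of type $\nabla A$ in context $\Delta\Gamma$ in the cubical assembly model. For each $\gamma \in \Gamma$ define $t(\gamma) := s(1,\gamma)(\myid_1) \in A(\gamma)$. A tracker of $s$, evaluated on a realizer of $(1,\gamma) \in \mycatEl(\Delta\Gamma)$, produces a tracker of $s(1,\gamma) : \myicat{C}(1,1) \to A(\gamma)$, and applying this tracker to a fixed realizer of $\myid_1 \in \myicat{C}(1,1)$ yields a realizer of $t(\gamma)$. Thus $t$ is a section of $A$ in $\myAsm$, contradicting the hypothesis; an appeal to \cref{orgbf23d34} then finishes the argument.

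The main delicate point is the realizability bookkeeping: one must correctly identify the common realizer of $\nabla A(c,\gamma)$ and ensure that a tracker of the hypothetical section $s$ specialises at $(1,\myid_1)$ to a genuine tracker of $t$. Uniformity of $A$ carries through the exponential $\myicat{C}(1,c) \to A(\gamma)$ precisely because constant functions at $k_\gamma$ are tracked by the constant program, which is what makes the reduction to \cref{orgbf23d34} go through.
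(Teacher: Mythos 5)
Your proof is correct and follows essentially the same route as the paper's: reduce to \cref{orgbf23d34}, check uniformity of $\nabla A$ via constant functions at a common realizer, check well-supportedness by composing the witness for $A$ with the constant-function construction, and rule out a section by evaluating at the terminal object of $\myicat{C}$ and at $\myid_1$. One small imprecision: the well-supportedness witness $e_A$ applied to a realizer $x$ of $\gamma$ yields a realizer of \emph{some} element of $A(\gamma)$, not a common realizer $k_\gamma \in \bigcap_{a \in A(\gamma)} E_{A(\gamma)}(a)$ as you claim; fortunately your concluding sentence only uses the weaker (true) fact, so the argument stands.
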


\begin{proof}
By \cref{orgbf23d34}, it suffices to show that
the type \(\Delta \Gamma \vdash \nabla A\) is uniform and well-supported
but does not have a section. For the uniformity, let \(k_{\gamma}\)
be a common realizer of \(A(\gamma)\) for \(\gamma \in
   \Gamma\). For any object \(c \in \myicat{C}\) and element
\(\gamma \in \Gamma\), the code of the constant function \(n
   \mapsto k_{\gamma}\) is a common realizer of \(\nabla A(c, \gamma)
   = \myicat{C}(1, c) \to A(\gamma)\).

For the well-supportedness, let \(e\) be a partial recursive function
such that, for any \(\gamma\) and \(n \in E_{\Gamma}(\gamma)\),
there exists an \(a \in A(\gamma)\) such that \(en\) is defined and
belongs to \(E_{A(\gamma)}(a)\). Then the function \(f\) mapping
\((n, x)\) to the code of the function \(y \mapsto ex\) realizes
that \(\nabla A\) is well-supported. Indeed, for any \(c \in
   \myicat{C}\), \(n \in E_{\myicat{C}}(c)\), \(\gamma \in \Gamma\)
and \(x \in E_{\Gamma}(\gamma)\), the code \(f(n, x)\) realizes the constant
function \(\myicat{C}(1, c) \ni \sigma \mapsto a \in A(\gamma)\)
for some \(a \in A(\gamma)\) such that \(ex \in E_{A(\gamma)}(a)\).

Finally \(\nabla A\) does not have a section because \(\nabla A(1)
   \cong A\) and \(A\) does not have a section.
\end{proof}

\subsection{The Counterexample}
\label{sec:org09fc1c9}
We define an assembly \(\Gamma\) to be \((\myN, n \mapsto \{m \in
   \myN \mid m > n\})\) and a family of assemblies \(A\) on \(\Gamma\)
as \(A(n) = (\{m \in \myN \mid m > n\}, m \mapsto \{n,
   m\})\). Then \(A\) is uniform because every \(A(n)\) has a common
realizer \(n\). The identity function realizes that \(A\) is
well-supported.  To see that \(A\) does not have a section, suppose
that a section \(f \in \prod_{n \in \Gamma}A(n)\) is tracked by a
partial recursive function \(e\). Then for any \(m > n\), we have
\(em \in \{n, f(n)\}\). This implies that \(m \leq e(m + 1) \leq
   f(0)\) for any \(m\), a contradiction. Note that this construction
of \(\Gamma \vdash A\) works for any non-trivial partial
combinatory algebra \(C\) because natural numbers can be effectively encoded in
\(C\).

Since \(\myicat{B}_{\myord}(1, \myI) \cong \myBoolean\) only contains
end-points, the type \(\Delta \Gamma \vdash \nabla A\) in the
cubical assembly model is a fibration and homotopy proposition by
\cref{org242a9f3,org7b84af3}, while by
\cref{org5a648cb} the function \(\Delta \Gamma
   \vdash \eta : \nabla A \to (\nabla A)^{*}\) is not an
equivalence. Hence the propositional resizing axiom fails in the
cubical assembly model.

\section{Conclusion and Future Work}
\label{sec:org2948b31}

We have formulated the axioms for modeling cubical type theory in an
elementary topos given by Orton and Pitts \cite{orton2016axioms} in a
weaker setting and explained how to construct a model of cubical
type theory in a category satisfying those axioms. As a striking
example, we have constructed a model of cubical type theory with an
impredicative and univalent universe in the category of cubical
assemblies which is not an elementary topos. It has turned out that
this impredicative universe in the cubical assembly model does not
satisfy the propositional resizing axiom.

There is a natural question: can we construct a model of type theory
with a univalent and impredicative universe satisfying the
propositional resizing axiom? One possible approach to this question
is to consider a full subcategory of the category of cubical
assemblies in which every homotopy proposition is equivalent to some
modest family. Benno van den Berg \cite{vandenberg2018univalent}
constructed a model of a variant of homotopy type theory with a
univalent and impredicative universe of \(0\)-types that satisfies the propositional
resizing axiom. Roughly speaking he uses a category of degenerate
trigroupoids in the category of \emph{partitioned assemblies}
\cite{vanoosten2008realizability}, and thus the category of cubical
partitioned assemblies is a candidate for such a full
subcategory. However, the model given in
\cite{vandenberg2018univalent} only supports weaker forms of identity
types and dependent product types, and it is unclear whether it can
be seen as a model of ordinary homotopy type theory.

Higher inductive types are another important feature of homotopy type
theory. One can construct some higher inductive types including
propositional truncation in the cubical assembly model
\cite{swan2019church}, internalizing the construction of higher
inductive types in cubical sets \cite{coquand2018higher} using
\(W\)-types with reductions \cite{swan2018wtypes}. An open question,
raised by Steve Awodey, is whether these higher inductive types are
equivalent to their impredicative encodings.

The cubical assembly model is a realizability-based model of type
theory with higher dimensional structures, but it does not seem to be
what should be called a \emph{realizability \(\infty\)-topos}, a
higher dimensional analogue of a realizability topos
\cite{vanoosten2008realizability}. One problem is that, in the cubical
assembly model, realizers seem to play no role in its internal cubical
type theory, because the existence of a realizer of a homotopy
proposition does not imply the existence of a section of it. Indeed,
the cubical assembly model does not satisfy Church's Thesis
\cite{swan2019church} which holds in the effective topos
\cite{hyland1982effective}. One can nevertheless find a left exact
localization of the cubical assembly model in which Church's Thesis
holds \cite{swan2019church}.

Our construction of models of cubical type theory is a syntactic one
following Orton and Pitts \cite{orton2016axioms}. The original idea of
using the internal language of a topos to construct models of
cubical type theory was proposed by Coquand
\cite{coquand2015internal}. There are also semantic and categorical
approaches. Frumin and van den Berg \cite{frumin2018homotopy} presented
a way of constructing a model structure on a full subcategory of an
elementary topos with a path connection algebra, which is
essentially same as the model structure on the category of fibrant
cubical sets described by Spitters \cite{spitters2016cubical}.  Since
they make no essential use of subobject classifiers, we conjecture
that one can construct a model structure on a full subcategory of a
suitable locally cartesian closed category with a path connection
algebra. Sattler \cite{sattler2017equivalence}, based on his earlier
work with Gambino \cite{gambino2017frobenius}, gave a construction of
a right proper combinatorial model structure on a suitable category
with an interval object. Although Gambino and Sattler use Garner's
small object argument \cite{garner2009understanding}
which requires the cocompleteness of underlying categories, their
construction is expected to work for non-cocomplete categories such
as the category of cubical assemblies using Swan's small object
argument over codomain fibrations
\cite{swan2018lifting,swan2018wtypes}.

\label{sec-7}
\bibliography{references}

\label{sec-8}
\appendix
\section{Details of Composition for Gluing and Universe}
\label{sec:org2687926}
We give explicit definitions of composition operations for gluing
and universes described in \cref{sec:org7bb1bad}.

Before that, we introduce some notations. for a fibration \(\Gamma,
  i : \myI \vdash A(i)\), one can derive the \emph{composition operation}
\begin{mathpar}
  \inferrule
  {\Gamma \vdash e : \myBoolean \\
    \Gamma \vdash \varphi : \myCof \\
    \Gamma, i : \myI \vdash f(i) : \varphi \to A(i) \\
    \Gamma \vdash a : A(e) \\
    \Gamma, u : \varphi \vdash f(e)u = a}
  {\Gamma \vdash \mycomp_{e}^{i}(A(i), f(i), a) : A(\bar{e})}
\end{mathpar}
such that \(\Gamma, u : \varphi \vdash f(\bar{e})u =
  \mycomp_{e}^{i}(A(i), f(i), a)\). Concretely, for a fibration
structure \(\alpha : \myFib(A)\), we define
\[\gamma : \Gamma \vdash \mycomp_{e}^{i}(A(i), f(i), a) :=
  \alpha(\lambda i.(\gamma, i), e, \varphi, \lambda ui.f(i)u, a).\]
In the notation \(\mycomp_{e}^{i}(A(i), f(i), a)\), the variable
\(i\) is considered to be bound. Usually we use the composition
operation in the form of
\[\mycomp_{e}^{i}(A(i), [(u_{1} : \varphi_{1}) \mapsto g_{1}(u_{1},
  i), \dots, (u_{n} : \varphi_{n}) \mapsto g_{n}(u_{n}, i)], a)\]
with a system \([(u_{1} : \varphi_{1}) \mapsto g_{1}(u_{1}, i),
  \dots, (u_{n} : \varphi_{n}) \mapsto g_{n}(u_{n}, i)] : \varphi_{1}
  \lor \dots \lor \varphi_{n} \to A(i)\).

\subsection{Some Derived Notions and Operations}
\label{sec:org2351daa}

We recall some notions and operations derivable in cubical type
theory without gluing and universes.

Composition operations are preserved by function application
{\cite[Section 5.2]{cohen2016cubical}}: one can derive an operation
\begin{mathpar}
  \inferrule
  {\Gamma, i : \myI \vdash h(i) : A(i) \to B(i) \\
    \Gamma \vdash e : \myBoolean \\
    \Gamma \vdash \varphi : \myCof \\
    \Gamma, i : \myI \vdash f(i) : \varphi \to A(i) \\
    \Gamma \vdash a : A(e) \\
    \Gamma, u : \varphi \vdash f(e)u = a}
  {\Gamma \vdash \myprsv_{e}^{i}(h(i), f(i), a) : \myPath(B(\bar{e}), c_{1}, c_{2})}
\end{mathpar}
such that \(\Gamma, u : \varphi, j : \myI \vdash
   h(\bar{e})(f(\bar{e})u) = \myprsv_{e}^{i}(h(i), f(i), a)j\), where
\(c_{1} = \mycomp_{e}^{i}(B(i), h(i) \circ f(i), h(e)a)\) and
\(c_{2} = h(\bar{e})(\mycomp_{e}^{i}(A(i), f(i), a))\).

Equivalences are characterized by a kind of extension property
{\cite[Section 5.3]{cohen2016cubical}}: for fibrations \(\Gamma \vdash A\) and \(\Gamma \vdash
   B\), one can derive an operation
\begin{mathpar}
  \inferrule
  {\Gamma \vdash f : A \simeq B \\
    \Gamma \vdash e : \myBoolean \\
    \Gamma \vdash \varphi : \myCof \\
    \Gamma \vdash b : B \\
    \Gamma \vdash p : \varphi \to \sum_{a : A}\myPath(B, b, fa)}
  {\Gamma \vdash \myequiv(f, p, b) : \sum_{a : A}\myPath(B, b, fa)}
\end{mathpar}
such that \(\Gamma, u : \varphi \vdash pu = \myequiv(f, p, b)\).

For a fibration \(\Gamma, i : \myI \vdash A(i)\), we define a
function called \emph{transport} \(\Gamma, e : \myBoolean \vdash
   \mytransport_{e}^{i}(A(i)) : A(e) \to A(\bar{e})\) to be
\(\mytransport_{e}^{i}(A(i))a = \mycomp_{e}^{i}(A(i), [], a)\).
This function \(\mytransport_{e}^{i}(A(i))\) is an equivalence
{\cite[Section 7.1]{cohen2016cubical}}.

\subsection{Gluing}
\label{sec:org07f4ccc}

\begin{proof}
[Proof of \cref{orgeaa893e}]
Let
\(p : \myinterval \to \Gamma\),
\(e : \myBoolean\),
\(\psi : \myCof\),
\(g : \psi \to \prod_{i : \myinterval}\prod_{u : \varphi(pi)}A(u)\),
\(h : \psi \to \prod_{i : \myinterval}B(pi)\),
\(a : \prod_{u : \varphi(pe)}A(u)\) and
\(b : B(pe)\),
and suppose
\(\forall_{v : \psi}\forall_{i : \myinterval}\forall_{u :
   \varphi(pi)}f(u)(gviu) = hvi\),
\(\forall_{u : \varphi(pe)}f(u)(au) = b\) and
\(\forall_{v : \psi}gve = a \land hve = b\).
We have to find elements \(\bar{a} : \prod_{u :
   \varphi(p\bar{e})}A(u)\) and \(\bar{b} : B(p\bar{e})\) such that
\(\forall_{u : \varphi(p\bar{e})}f(u)(\bar{a}u) = \bar{b}\) and
\(\forall_{v : \psi}gv\bar{e} = \bar{a} \land hv\bar{e} = \bar{b}\).
We define
\begin{align*}
  \bar{b}_{1} &:= \mycomp_{e}^{i}(B(pi), [(v : \psi) \mapsto hvi], b) :
                B(p\bar{e}) \\
  \delta &:= \forall_{i : \myinterval}\varphi(pi) : \myCof \\
  \bar{a}_{1} &:= \lambda w.\mycomp_{e}^{i}(A(wi), [(v : \psi) \mapsto gvi(wi)], a(we)) :
                \prod_{w : \delta}A(w\bar{e}) \\
  q &: \prod_{w : \delta}\myPath(\bar{b}_{1}, f(w\bar{e})(\bar{a}_{1}w)) \\
  qw &:= \myprsv_{e}^{i}(f(wi), [(v : \psi) \mapsto gvi(wi)], a(we)) \\
  \bar{a} &: \prod_{u : \varphi(p\bar{e})}A(u) \\
  q_{2} &: \prod_{u : \varphi(p\bar{e})}\myPath(\bar{b}_{1}, f(u)(\bar{a}u)) \\
  (\bar{a}u, q_{2}u) &:= \myequiv(f(u), [(w : \delta) \mapsto (\bar{a}_{1}w, qw), (v : \psi) \mapsto (gv\bar{e}u, \lambda i.\bar{b}_{1})], \bar{b}_{1}) \\
  \bar{b} &:= \mycomp_{0}^{i}(B(p\bar{e}), [(u : \varphi(p\bar{e})) \mapsto q_{2}ui, (v : \psi) \mapsto hv\bar{e}], \bar{b}_{1}) :
            B(p\bar{e})
\end{align*}
Then one can derive that \(\bar{b} = q_{2}u1 = f(u)(\bar{a}u)\) for
\(u : \varphi(p\bar{e})\) and that \(\bar{a} = gv\bar{e}\) and
\(\bar{b} = hv\bar{e}\) for \(v : \psi\). Moreover, for every \(w :
   \prod_{i : \myI}\varphi(pi)\), we have \(\bar{a}(w\bar{e}) =
   \bar{a}_{1}w = \mycomp_{e}^{i}(A(wi), [(v : \psi) \mapsto gvi(wi)],
   a(we))\) which means the preservation of fibration structure by the
function \(\Gamma, u : \varphi \vdash \lambda (a, b).au :
   \myGlue(\varphi, f) \to A(u)\).
\end{proof}

\subsection{Universes}
\label{sec:org07d573e}

\begin{proof}
[Proof of \cref{org97bc451}]
Let \(e : \myBoolean\), \(\varphi : \myCof\), \(f : \varphi \to
   \myinterval \to \myUniverse^{F}\) and \(B : \myUniverse^{F}\) such
that \(\forall_{u : \varphi}fue = B\). We have to find a \(\bar{B} :
   \myUniverse^{F}\) such that \(\forall_{u : \varphi}fu\bar{e} =
   \bar{B}\). Let \(A := \lambda u.fu\bar{e} : \varphi \to
   \myUniverse^{F}\). We have an
equivalence \(g := \lambda u.\mytransport_{\bar{e}}^{i}(fui) : \prod_{u : \varphi}Au \simeq B\). Let \(\bar{B} :=
   \mySGlue(\varphi, g) : \myUniverse^{F}\), then \(\forall_{u :
   \varphi}fu\bar{e} = Au = \bar{B}\).
\end{proof}
\end{document}